\newtheorem{theorem}{Theorem}
\newtheorem{lemma}{Lemma}
\newtheorem{definition}{Definition}
\newtheorem{coro}{Corollary}
\newtheorem{proposition}{Proposition}
\DeclareMathAlphabet{\mathpzc}{OT1}{pzc}{m}{it}
\renewcommand{\baselinestretch}{1.6}
\def\bE{{\mathbf E}}
\def\bH{{\mathbf H}}
\def\bI{{\mathbf I}}
\def\e{{\rm e}}
\def\E{{\mathbf E}}
\def\b0{{\mathbf 0}}
\begin{document}

\title{Clean relaying aided cognitive radio under the coexistence constraint}

\author{
\authorblockN{Pin-Hsun Lin, Shih-Chun Lin, Hsuan-Jung Su and Y.-W. Peter Hong}
\thanks{Pin-Hsun Lin and Hsuan-Jung Su  are with Department of
Electrical Engineering and Graduate Institute of Communication
Engineering, National Taiwan University, Taipei, Taiwan 10617.
Shih-Chun Lin and Yao-Win Peter Hong are with Institute of
Communications Engineering, National Tsing Hua University,
HsinChu, Taiwan, 30013. Emails: \{pinhsunlin@gmail.com,
linsc@mx.nthu.edu.tw, hjsu@cc.ee.ntu.edu.tw,
ywhong@ee.nthu.edu.tw\}. The material in this paper was presented
in part at the Annual Conference on Information Sciences and
Systems, 2010. This work was supported by the National Science
Council, Taiwan, R.O.C., under grant NSC 98-2219-E-002-016.} }
\maketitle \thispagestyle{empty} \vspace{-15mm}
{\renewcommand{\baselinestretch}{1.45}
\begin{abstract}
We consider the interference-mitigation based cognitive radio
where the primary and secondary users can coexist at the same time
and frequency bands, under the constraint that the rate of the
primary user (PU) must remain the same with a single-user decoder.
To meet such a coexistence constraint, the relaying from the
secondary user (SU) can help the PU's transmission under the
interference from the SU. However, the relayed signal in the known
dirty paper coding (DPC) based scheme is interfered by the SU's
signal, and is not ``clean''. In this paper, under the half-duplex
constraints, we propose two new transmission schemes aided by the
clean relaying from the SU's transmitter and receiver without
interference from the SU. We name them as the clean transmitter
relaying (CT) and clean transmitter-receiver relaying (CTR) aided
cognitive radio, respectively. The rate and multiplexing gain
performances of CT and CTR in fading channels with various
availabilities of the channel state information at the
transmitters (CSIT) are studied. Our CT generalizes the celebrated
DPC based scheme proposed previously. With full CSIT, the
multiplexing gain of the CTR is proved to be better (or no less)
than that of the previous DPC based schemes. This is because the
silent period for decoding the PU's messages for the DPC may not
be necessary in the CTR. With only the statistics of CSIT, we
further prove that the CTR outperforms the rate
performance of the previous scheme in fast Rayleigh fading channels. The numerical examples also show that in a large class of channels, the proposed CT and CTR provide significant rate gains over the previous scheme with small complexity penalties.\\
\end{abstract}
}
\vspace{-13mm}

\section{Introduction}
\vspace{-2mm} Efficient spectrum usage becomes a critical issue to
satisfy the increasing demands for high data rate services. Recent
measurements from the Federal Communications Commission (FCC) have
indicated that ninety percent of the time, many licensed frequency
bands remain unused and are wasted. Cognitive radio
\cite{Mitola_CR} is a promising technique to cope with such
problems by accessing the unused spectrum dynamically. This new
technology is capable of dynamically sensing and locating unused
spectrum segments in a target spectrum pool, and communicating via
the unused spectrum segments without causing harmful interference
to the primary users. The primary user (PU) is the user who
communicates in the licensed band using existing commercial
standards, while the user who uses the cognitive radio technology
is called the secondary user (SU). Originally, the cognitive radio
adopts the interference avoidance methodology, that is, if a PU
demands the licensed band, the SU should vacate and find an
alternative one. Recently, the concept of \textit{interference
mitigation} was proposed for the cognitive radio
\cite{Devroye_CR_achievable_rate}, where the SU and PU can coexist
and simultaneously transmit at the same time and frequency bands
to further improve the spectrum efficiency. The key is to allow
cooperations between the transmitters of the SU and PU. To make
the interference-mitigation based cognitive radio in
\cite{Devroye_CR_achievable_rate} more practical, the
\textit{coexistence constraint} was further proposed in
\cite{Viswanath_CR}. The cognitive radio is forced to maintain the
same PU rate performance as if it is silent, under the constraint
that the decoder of PU must be a \textit{single-user} decoder,
such as the conventional minimum distance decoder. Assuming that
the PU's message is known by the SU, in \cite{Viswanath_CR}, the
SU's transmitter not only transmits its own signal but also relays
the PU's signal to meet the coexistence constraint. Moreover, by
precoding with the celebrated dirty paper coding (DPC)
\cite{CostaDPC}, the SU's receiver can decode as if the
interference from the PU does not exist. Indeed, such a
transmission scheme is proved to be capacity-achieving in some
channel conditions \cite{Viswanath_CR}.

\setcounter{page}{1}

However, there are still some deficiencies and impractical
assumptions in the cognitive radio proposed in \cite{Viswanath_CR}
which motivate our work. First, in \cite{Viswanath_CR}, the
relayed PU's signal and the SU's own signal are simultaneously
transmitted. Since the SU's signal is an interference to the PU's
receiver, it pollutes the relaying and may cause power
inefficiency. Second, the DPC requires that the SU's transmitter
knows the PU's message. It may be hard to satisfy this
requirement, especially when the channel between the transmitters
of the PU and SU is not good enough. Finally, the perfect channel
state information at the transmitter (CSIT) may not always be
available, epically when the channel is fast faded. Without full
CSIT, the DPC used in \cite{Viswanath_CR} suffers
\cite{scCR_TWCOMs09}. To solve these problems, we propose two new
transmission schemes for cognitive radio which are aided by the
``clean'' relaying to the PR's receiver without the interference
from the SU. Under the half-duplex constraint, the clean relaying
comes from the transmitter or/and the receiver of SU, thus we name
the proposed schemes as the clean transmitter relaying (CT) and
the clean transmitter-receiver relaying (CTR) aided cognitive
radio, respectively.

Our main contributions are proposing the new CT and CTR to improve
the performance in \cite{Viswanath_CR}. Our CT generalizes the
DPC-precoded cognitive radio in \cite{Viswanath_CR}. Moreover, our
CTR can also avoid the last two problems mentioned in the previous
paragraph since it does not require the DPC. The cooperation
method of the CTR makes it face a multiple-access channel (MAC)
with common message, and we adopt the optimal signaling for this
channel from \cite{liu2006capacity} in the CTR. We also invoke the
channel coding theorem in \cite{Pombra_feedback_capacity} to
ensure that the coexistence constraint is met under the relaying.
With full CSIT and high signal-to-noise ratio (SNR), we find that
the multiplexing gain performance of the CTR is better than (or at
least no less than) that of \cite{Viswanath_CR}. This is due to
the fact that the silent period spent on decoding the PU's
messages for the DPC in \cite{Viswanath_CR} may not be necessary
in the CTR. When there is only the statistics of CSIT, the CTR is
even more promising. We observe that the DPC used in
\cite{Viswanath_CR} fails in fast Rayleigh fading channels, that
is, the rate performance of the SU is the same as that of treating
the interference from the PU as pure noise at the SU's receiver.
Then the CTR always has better rate performance than that of
\cite{Viswanath_CR} for all SNR regimes. We also identify the
structure of the optimal common message relaying ratio for the CTR
by exploring the corresponding stochastic rate optimization
problem. Simulation results verify the superiority of the proposed
CT and CTR over methods in \cite{Viswanath_CR} in terms of rates
and multiplexing gains under a large class of channels. Finally,
the complexity of the CTR is lower than that in
\cite{Viswanath_CR}, while the complexity of the CT is
approximately the same as that in \cite{Viswanath_CR}. The former
is because the signaling from \cite{liu2006capacity} adopted in
the CTR is much easier to implement in practice than the
complicated DPC \cite{DPC_Erez_IT05_RA}.

The cognitive channel model studied in the paper is related to
\cite{Wang_sIT09}\cite{ng2007capacity}, where cooperations in
interference channels were studied. However, the coexistence
constraints were not imposed in these papers, and thus the relay
strategies could be more flexible to obtain better rate
performance compared with ours. As noted in \cite{Viswanath_CR},
the capacity results for these less restricted channels can serve
as the performance outer bounds for our setting. Moreover, full
CSIT is usually assumed in the literatures
\cite{Devroye_CR_achievable_rate}\cite{Viswanath_CR}\cite{Wang_sIT09}\cite{ng2007capacity}
(also in our previous work \cite{scCR10}), while this work also
considers the partial CSIT case. With only the statistics of CSIT,
we show that our CTR outperforms the DPC based schemes in
\cite{Viswanath_CR}\cite{scCR_TWCOMs09} in fast Rayleigh fading
channels. In addition, the CT and multiplexing gain analysis also
are new, and did not appear in our previous works \cite{scCR10}.

The paper is organized as following. The system model is discussed
in Sec. \ref{Sec_system_model}. In Sec.\ref{sec_UT} and
\ref{sec_UTR}, we present the proposed CT and CTR and their rate
and multiplexing gain performances with full CSIT, respectively.
The performance analysis and the optimal common message relaying
ratio with only the statistics of CSIT in fast Rayleigh fading
channels are given in Sec. \ref{sec_ergodic}. We provide numerical
examples in Sec. \ref{Sec_Simu}. Finally, Sec.
\ref{Sec_Conclusion} concludes this paper.

\vspace{-3mm}
\section{System Model}\label{Sec_system_model}
\vspace{-3mm}
\subsection{Notations}
\vspace{-3mm} In this paper, the superscript $(.)^H$ denotes the
transpose complex conjugate. Identity matrix of dimension $n$ is
denoted by $\mathbf{I}_n$. A block-diagonal matrix with diagonal
entries $\mathbf{A}_1, \ldots, \mathbf{A}_k$ is denoted by
$diag(\mathbf{A}_1, \ldots, \mathbf{A}_k)$; while $|\mathbf{A}|$
and $|a|$ represent the determinant of a square matrix
$\mathbf{A}$ and the absolute value of a scalar variable $a$,
respectively. The mutual information between two random variables
is denoted by $I(;)$. We define $C(x) \triangleq \log(1+x)$ (the
base of $\log$ function is 2), and the function $(x)^+$ as
$(x)^+=x$ if $x\geq 0$, otherwise, $(x)^+=0$. Also the indicating
function $\mathbf{1}_A$ is one if the event $A$ is valid, and is
zero otherwise. \vspace{-5mm}
\subsection{Cognitive channel model}
\vspace{-2mm} As shown in Fig.~\ref{CR_channel}, in the considered
four-node cognitive channel, Node 1 and 2 are the transmitters of
PU and SU while Node 4 and 3 are the corresponding receivers,
respectively. For the $t$-th symbol time where $t$ is the discrete
time index, the received signals $Y_2(t)$, $Y_3(t)$ and $Y_4(t)$
at Node 2, 3 and 4 can be respectively represented by
\begin{equation} \label{Eq_channel}
\left[\begin{array}{c} Y_4(t)\\
Y_3(t) \\ Y_2(t) \end{array}\right]=\left[\begin{array}{ccc} h_{14}(t) &h_{24}(t)&h_{34}(t)\\
h_{13}(t) &h_{23}(t)&0 \\ h_{12}(t) &0 &0 \end{array}\right]\left[\begin{array}{c}X_1(t)\\X_2(t) \\X_3(t)\end{array}\right]+\left[\begin{array}{c} Z_4(t)\\
Z_3(t) \\
Z_2(t)\end{array}\right],
\end{equation}
where the channel gain between node $i$ and $j$ is denoted by
$h_{ij}(t)$, and $Z_i(t)$ is the additive white Gaussian noise
process at node $i$. Each time sample of $Z_i(t)$ is independent
and identically distributed (i.i.d.) circularly-symmetric complex
Gaussian, i.e., $Z_i \sim \mathcal{CN}(0,1)$. Signals transmitted
from Node 1, 2 and 3 are denoted as $X_1(t)$, $X_2(t)$, and
$X_3(t)$ with long term average power constraints $\bar{P}_1$,
$\bar{P}_2$, and $\bar{P}_3$, respectively as
\begin{equation} \label{Eq_avg_power}
\frac{1}{n}\sum^n_{t=1}[|X_i(t)|^2] \leq \bar{P}_i,
\;\;\mathrm{for}\;\;i=1,\,2,\,3,
\end{equation}
where $n$ is the number of coded symbols in a codeword. Note that
all nodes are \textit{half-duplex}.

In this paper, we consider two cases with different channel
knowledge of $h_{ij}(t)$ at the transmitter, while the channel
gains $h_{ij}(t)$ are always assumed perfectly known at the
corresponding receivers. In the first case,
$h_{ij}(t)=h_{ij}=|h_{ij}|e^{j \theta_{ij}}, \;\; \forall 1 \leq t
\leq n$, where $\theta_{ij}$ is the channel phase. As for the CSIT
assumptions, we assume that Node 1 knows $h_{14}$, Node 3 knows
$h_{34}$, and Node 2 knows all channel gains based on the method
proposed in \cite{Viswanath_CR}. The second case is the fast
Rayleigh fading channel, where each $h_{ij}(t)$ is varying at each
$t$. We assume that $h_{ij}(t)$ are i.i.d. generated according to
a random variable $H_{ij}$, and $H_{ij}$ is complex Gaussian
distributed with zero mean and variance $\sigma^2_{ij}$. Moreover,
due to the limited channel feedback bandwidth, we assume that the
channel realizations $h_{ij}(t)$ are unknown at the transmitters.
However, Node 1 knows the statistics of $H_{14}$, Node 3 knows the
statistics of $H_{34}$, and Node 2 knows the statistics of all
channels by applying the methods in
\cite{scCR_TWCOMs09}\cite{Viswanath_CR}. The SU also knows the
target rate of the PU by using the methods in \cite[Sec.
II]{scCR_TWCOMs09}.

We restrict the decoder of PU at Node 4 as a single-user decoder.
A single-user decoder $D_s$ is defined to be any decoder which
performs well on the point-to-point channel with perfect channel
state knowledge at the decoder \cite{Viswanath_CR}. Without loss
of generality, we set the decoder to be the maximum-likelihood
decoder for fading channel with temporal independent Gaussian
noise as in \cite{viterbo2002universal} (minimum-distance
decoder). We then define the achievable rate under such decoder as
the following.
\begin{definition} \label{Def_SU}
A rate $R_1$ is \textit{single-user achievable} for the PU if
there exists a sequence of $(2^{nR_1},n)$ encoders $E^n_1$ that
encodes PU's message $w_1$, such that the average probability of
error vanishes to zero as $n \rightarrow \infty$ when the receiver
uses a single user decoder $D_s$.
\end{definition}
Denote the set of all primary encoders that map primary messages
to the transmitted signals as $E^n_1$ , we then have the following
definition.
\begin{definition}\label{Def_CR_code}
A cognitive radio code with rate $R_2$ and length $n$ consists of
an encoder to encode the SU's message $w_2$ with output
$X^n_2=\{X_2(1),\ldots,X_2(n)\}$ as $ E^n_2 : E^n_1 \times
\{1,\ldots,2^{nR_1}\} \times \{1,\ldots,2^{nR_2}\} \rightarrow
X^n_2, $ where $\|X^n_2\|^2/n \leq \bar{P}_2$, and a decoder to
decode message $w_2$ from the received signal
$Y^n_3=\{Y_3(1),\ldots,Y_3(n)\}$.
\end{definition}

Based on Definition \ref{Def_CR_code}, we have the following
definition for the achievable rate of the cognitive radio under
the \textit{coexistence constraint} \cite{Viswanath_CR} .

\begin{definition} \label{Def_CR}
The \textit{coexistence constraint} means that for a given PU's
rate $R_T$, the SU must take $R_T$ as a rate target and ensure
that under its own transmissions, $R_T$ is still
\textit{single-user achievable} for the PU as defined in
Definition \ref{Def_SU}. A rate $R_2$ is achievable for the SU if
there exists a sequence of $(2^{nR_2},n)$ cognitive radio codes
defined in Definition \ref{Def_CR_code} such that under the
coexistence constraint, the average probability of error vanishes
to zero  as $n \rightarrow \infty$.
\end{definition}

\vspace{-4mm}
\section{Clean Transmitter Relaying in Channels with Full CSIT} \label{sec_UT}
\vspace{-1mm}

For simplicity, we will introduce the CT aided cognitive radio and
its performance in channels with full CSIT first. Then we will
discuss the CTR which further allows the relaying from the SU's
receiver in Section \ref{sec_UTR}. As shown in Fig.~\ref{Fig_UTR}
(a), the new three-phase CT is a generalization of the two-phase
cognitive radio in \cite{Viswanath_CR}, by introducing an
additional ``clean'' relay link (without interference from the SU)
from Node 2 in the third phase. As will be shown later, to know
the PU's message $w_1$ for the DPC operation, Node 2 needs Phase 1
to be long enough to correctly decode $w_1$ from the received
signal. However, this phase is neglected in \cite{Viswanath_CR}
and most of the existing works. It is also clear from
Fig.~\ref{Fig_UTR} (a) that due to the half-duplex constraint, the
transmission scheme must be multi-phase since Node 2 cannot
receive and transmit at the same time. As will be clarified later,
the multi-phase transmission will cause SNR changes at Node 4 in
different phases. To deal with this new problem, we need to invoke
the upcoming Lemma \ref{Lemma_WeiYu} to meet the coexistence
constraint.

The detailed CT signaling method of each phase in
Fig.~\ref{Fig_UTR} (a) comes as the following. To simplify the
notations, we omit the time index of the signals in
\eqref{Eq_channel} to represent the corresponding signals in the
Shannon random coding setting \cite{Book_Cover}. For example,
$X_1$ corresponds to $X_1(t)$. Assume that each three-phase
transmission occupies $n$ symbol times, which forms a codeword. We
have

\noindent \textbf{Phase 1}: Within the first $\lfloor t_1n \rfloor
$ symbols, Node 2 listens to and decodes the PU's message $w_1$.
Here $t_i$ is the portion of time of Phase $i$, $i=1,2,3$.

\noindent \textbf{Phase 2}: If the decoding of $w_1$ is
successful, within the next $\lfloor t_2n \rfloor$ symbols, Node 2
sends the DPC encoded signal $X^D_2$ using side-information
$h_{13}X_1$ plus the relaying of $X_1$ as
\begin{equation} \label{Eq_Vish_X2}
    X_2=X^D_2+\sqrt{\alpha_1\frac{P_2}{P_1}}e^{j(\theta_{14}-\theta_{24})}X_1,
\end{equation}
where message $w_2$ is conveyed in $X^D_2$, $\alpha_1$ is the
relay ratio from Node 2 to maintain the rate performance $R_T$ of
the primary link, while $P_1$ and $P_2$ are the transmitted power
of $X_1$ and $X_2$ in Phase 2, respectively.

\noindent \textbf{Phase 3}: For the remaining $ t_3n=n-\lfloor
t_1n \rfloor-\lfloor t_2n \rfloor $ symbols, the clean relaying is
transmitted from Node 2 to assist decoding at Node 4 as
\begin{equation} \label{Eq_UT_X2_Phase2}
X_2= \sqrt{P_2/P_1}\e^{j(\theta_{14}-\theta_{24})}X_1.
\end{equation}
To meet average power constraints \eqref{Eq_avg_power}, the power
$P_1$ and $P_2$ are set as
\begin{equation} \label{Eq_UT_power}
P_1=\bar{P}_1 \;\; \mbox{and} \;\; (1-t_1)P_2=\bar{P}_2.
\end{equation}
After Phase 1, Node 2 knows the PU's message $w_1$. Since Node 2
also knows the PU's codebook from Definition \ref{Def_CR_code},
the PU's transmitted codeword and then the interference at Node 3
$h_{13}X_1$ is known at Node 2. The DPC results in \cite{CostaDPC}
can be applied by using $h_{13}X_1$ as the non-causally known
transmitter side-information which is unknown at Node 3.

Note that the received SNR of $X_1$ at Node 4 changes in different
phases (different block of symbols). To meet the coexistence
constraint in Definition \ref{Def_CR} under this phenomena, we
introduce a Lemma as
\begin{lemma} \label{Lemma_WeiYu}
For a block of $n$ transmissions over the channel $Y^n=\bH^n
X^n+Z^n$, where $n \times 1$ vectors $X^n$ and $Y^n$ are the
transmit and received signals respectively, the diagonal channel
matrix $\bH^n$ is known at the receiver, and $Z^n$ is a Gaussian
random sequence with diagonal covariance matrix $K_{Z^n}$ (each
element of $Z^n$ may not be identically distributed), the coding
rate $R$ is \textit{single-user achievable} for Gaussian codebooks
if
\begin{equation} \label{eq_WeiYu}
R<\frac{1}{n}\log\frac{|\bH^n
K_{X^n}(\bH^n)^H+K_{Z^n}|}{|K_{Z^n}|},
\end{equation}
where the covariance matrix $K_{X^n}$ of the transmitted signal
$X^n$ satisfies the power constraint.
\end{lemma}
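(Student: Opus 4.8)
The statement is essentially a channel coding theorem for a parallel (diagonal) Gaussian channel with non-identically distributed noise, asserting that the mutual-information expression on the right is achievable by a *single-user* decoder (in the sense of Definition 1). Let me think about what must actually be shown.

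The channel is $Y^n = \bH^n X^n + Z^n$ with $\bH^n$ diagonal and known at the receiver, and $Z^n$ Gaussian with diagonal covariance $K_{Z^n}$. Since everything is diagonal, this decouples into $n$ parallel scalar Gaussian channels $Y_t = h_t X_t + Z_t$ with $Z_t \sim \mathcal{CN}(0, \sigma_t^2)$, independent across $t$. The right-hand side of \eqref{eq_WeiYu} is
$$
\frac{1}{n}\log\frac{|\bH^n K_{X^n}(\bH^n)^H + K_{Z^n}|}{|K_{Z^n}|}.
$$
If $K_{X^n}$ is also diagonal this is just $\frac{1}{n}\sum_t \log(1 + |h_t|^2 K_{X_t}/\sigma_t^2)$, the sum of parallel-channel capacities. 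The crux is the word *single-user*: I need the rate to be achievable not merely by some cleverly matched decoder, but by a decoder that is oblivious to the noise-shaping, i.e., one that "performs well on the point-to-point channel" in the sense of Definition 1.

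**The approach.**

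The natural route is to appeal directly to an existing coding theorem rather than rebuild a random-coding argument. Let me think about what's available. The text says the authors "invoke the channel coding theorem in \cite{Pombra_feedback_capacity}". The Cover–Pombra result establishes the capacity of a Gaussian channel with arbitrary (possibly colored, non-stationary) additive Gaussian noise over a finite block, and crucially shows it is achieved by Gaussian inputs with the mutual-information formula $\frac{1}{n}\log\frac{|\bH^n K_{X^n}(\bH^n)^H + K_{Z^n}|}{|K_{Z^n}|}$. So the plan is:

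First, write $X^n$ as a Gaussian codebook drawn i.i.d. according to $\mathcal{CN}(0, K_{X^n})$, and identify the mutual information $\frac{1}{n} I(X^n; Y^n)$ of the resulting vector Gaussian channel. A standard determinant computation gives exactly the right-hand side of \eqref{eq_WeiYu}. Second, invoke the Cover–Pombra coding theorem to conclude that this rate is achievable with vanishing error probability as $n \to \infty$. Third — and this is the part that carries the real content for *this* paper — argue that the decoder achieving this rate can be taken to be a single-user decoder per Definition 1, namely a minimum-distance / maximum-likelihood decoder for the equivalent point-to-point Gaussian channel with known channel matrix $\bH^n$ at the receiver. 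Because $\bH^n$ is known at the receiver and the noise covariance is diagonal, the ML decoder is a (weighted) minimum-distance decoder, which is precisely the kind of single-user decoder Definition 1 allows.

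**Where the difficulty lies.**

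The main obstacle is the third step, the single-user-decoder claim, not the determinant algebra. The inequality \eqref{eq_WeiYu} and the Gaussian-input mutual information are routine; the genuine subtlety is matching the decoder to the operational definition in Definition 1, which demands a decoder that works well on the point-to-point channel with perfect channel-state knowledge. I would argue that since the noise is Gaussian and temporally independent (diagonal $K_{Z^n}$) and $\bH^n$ is known at the receiver, the maximum-likelihood decoder is exactly the minimum-distance decoder used in \cite{viterbo2002universal}, so it qualifies. A secondary subtlety worth flagging: if $K_{X^n}$ is non-diagonal, the channel no longer decouples into independent scalar channels and one must rely on the genuinely vector version of the Cover–Pombra theorem; but since the decoder sees a known $\bH^n$ and Gaussian noise, the structure still reduces to a point-to-point Gaussian channel, and the single-user decoder remains valid. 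I expect the cleanest exposition to treat the problem as a single point-to-point Gaussian channel $Y^n = \bH^n X^n + Z^n$ directly, invoke \cite{Pombra_feedback_capacity} for achievability of the stated rate with a Gaussian codebook, and then observe that the ML decoder for this channel is a single-user decoder, so the achieving rate is single-user achievable, completing the proof.
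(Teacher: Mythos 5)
Your proposal is correct and follows essentially the same route as the paper: the paper also invokes the Cover--Pombra coding theorem (AEP for arbitrary Gaussian processes) to show the right-hand side of \eqref{eq_WeiYu} is achievable, and then passes to the maximum-likelihood decoder of Definition~\ref{Def_SU} to conclude single-user achievability. The only cosmetic difference is that the paper gets to the ML decoder by noting it can only outperform the suboptimal jointly typical decoder, whereas you argue directly that the ML (weighted minimum-distance) decoder qualifies structurally as a single-user decoder; both observations are valid and lead to the same conclusion.
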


In Lemma \ref{Lemma_WeiYu}, the $n \times n$ channel matrix
$\bH^n$ is a collection of scalar time-domain channel coefficients
over the $n$ transmissions. It is different to the spatial-domain
channel matrix over single transmission in the multiple-antenna
system \cite{zheng_tse_2003}, where the vector channels are
assumed to be i.i.d in time. The proof of this lemma follows the
steps in \cite{Pombra_feedback_capacity} where the asymptotic
equipartition property for arbitrary Gaussian process is invoked
to prove that the right-hand-side (RHS) of \eqref{eq_WeiYu} is
achievable by the suboptimal jointly typical decoder. Then it is
also achievable by the optimal maximum-likelihood decoder defined
in Definition \ref{Def_SU}. The detail is omitted. Then we have
the following achievable rate result for the CT with the proof
given in Appendix \ref{App_CT}
\begin{theorem} \label{Theorem_UT}
With full CSIT and the transmitted power setting in
\eqref{Eq_UT_power}, the following rate of SU is achievable by the
CT \vspace{-4mm}
\begin{equation} \label{eq_UT_CR}
R_2 \leq \max_{t_2,\alpha_1} \;\; t_2 C(|h_{23}|^2
(1-\alpha_1)P_2),
\end{equation}
which is subject to the constraint for coexistence with
\begin{align} \label{eq_UT_coex}
 R_T<t_1
C(|h_{14}|^2P_1)+t_2C\left(\frac{(|h_{14}|\sqrt{P_1}+|h_{24}|\sqrt{\alpha_1P_2})^2}{1+|h_{24}|^2(1-\alpha_1)P_2}\right)
+(1-t_1-t_2)C\left(\left(|h_{14}|\sqrt{P_1}+|h_{24}|\sqrt{P_2}\right)^2\right),
\end{align}
and the constraint for Node 2 to successfully decode PU's message
as
\begin{equation} \label{eq_UT_DF}
t_1>R_T/C(|h_{12}|^2P_1),
\end{equation}
where the intervals of Phase 1 and 2 are $\lfloor t_2n \rfloor$
and $\lfloor t_2n \rfloor$, respectively, and the relaying ratio
$\alpha_1\in [0 ,1]$.
\end{theorem}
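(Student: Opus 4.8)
The plan is to establish the three ingredients of the theorem separately --- the achievable secondary rate \eqref{eq_UT_CR}, the coexistence constraint \eqref{eq_UT_coex}, and the decoding constraint \eqref{eq_UT_DF} --- and then combine them by maximizing over the free parameters $t_2$ and $\alpha_1$. Throughout I let the primary user employ an i.i.d.\ Gaussian codebook of power $P_1=\bar{P}_1$ known to Node 2 (Definition \ref{Def_CR_code}), and I take the secondary DPC signal $X_2^D$ to be Gaussian with power $(1-\alpha_1)P_2$, so that from the viewpoint of Node 4 it acts as additive Gaussian interference.

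For the secondary rate \eqref{eq_UT_CR}, I would argue that only Phase 2 carries $w_2$. After Phase 1, Node 2 knows $w_1$ and hence the codeword $X_1$, so it knows the interference $h_{13}X_1$ seen at Node 3 non-causally. Writing the Phase-2 observation at Node 3 as $Y_3=h_{23}X_2^D+\big(h_{13}+h_{23}\sqrt{\alpha_1 P_2/P_1}\,e^{j(\theta_{14}-\theta_{24})}\big)X_1+Z_3$, the entire $X_1$-term is a transmitter-known interference, so Costa's DPC \cite{CostaDPC} lets Node 3 decode $X_2^D$ at the per-symbol rate $C(|h_{23}|^2(1-\alpha_1)P_2)$ as if the interference were absent. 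Since Phase 2 occupies a fraction $t_2$ of the block, the secondary rate is $t_2C(|h_{23}|^2(1-\alpha_1)P_2)$.

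The main step is the coexistence constraint, and this is where Lemma \ref{Lemma_WeiYu} does the work. I would cast the reception of the primary codeword $X_1$ at Node 4 over the whole length-$n$ block as the single channel $Y^n=\bH^n X^n+Z^n$ of the lemma, with $K_{X^n}=P_1\mathbf{I}_n$ and with a \emph{phase-dependent} diagonal $\bH^n$ and $K_{Z^n}$. In Phase 1 the effective gain is $|h_{14}|$ and the noise variance is $1$; in Phase 2, coherently combining the direct path and the relayed path \eqref{Eq_Vish_X2} gives effective gain $|h_{14}|+|h_{24}|\sqrt{\alpha_1 P_2/P_1}$ while $X_2^D$ inflates the noise variance to $1+|h_{24}|^2(1-\alpha_1)P_2$; in Phase 3 the pure relay \eqref{Eq_UT_X2_Phase2} gives gain $|h_{14}|+|h_{24}|\sqrt{P_2/P_1}$ and noise variance $1$. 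Because $\bH^n$ and $K_{Z^n}$ are diagonal, the determinant ratio in \eqref{eq_WeiYu} factorizes into a per-symbol sum; grouping the $t_1n$, $t_2n$, and $(1-t_1-t_2)n$ terms and using $(|h_{14}|\sqrt{P_1}+|h_{24}|\sqrt{\alpha_1 P_2})^2=(|h_{14}|+|h_{24}|\sqrt{\alpha_1 P_2/P_1})^2P_1$ reproduces exactly the right-hand side of \eqref{eq_UT_coex}. By Lemma \ref{Lemma_WeiYu} the single-user (minimum-distance) decoder at Node 4 then achieves any $R_T$ below this quantity, so the coexistence constraint of Definition \ref{Def_CR} is met.

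Finally, the decoding constraint \eqref{eq_UT_DF} comes from requiring Node 2 to recover $w_1$ from its Phase-1 observation alone, since the half-duplex Node 2 is silent and listening only during Phase 1 and sees $Y_2=h_{12}X_1+Z_2$. With an i.i.d.\ codebook, the mutual information accumulated over the $t_1n$ Phase-1 symbols is $t_1nC(|h_{12}|^2P_1)$, which must exceed the $nR_T$ bits of $w_1$, giving $t_1>R_T/C(|h_{12}|^2P_1)$. Collecting the three conditions and optimizing the secondary rate over $t_2$ and the relay ratio $\alpha_1\in[0,1]$ (with $t_1$ and $P_2$ fixed by \eqref{eq_UT_DF} and \eqref{Eq_UT_power}) yields the stated theorem. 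I expect the delicate points to be the bookkeeping of the coherent combining gains and the Phase-2 effective noise when building $\bH^n$ and $K_{Z^n}$, and the justification that the single-user decoder --- rather than a joint or successive decoder --- attains the factorized rate, which is precisely the content imported from \cite{Pombra_feedback_capacity} via Lemma \ref{Lemma_WeiYu}.
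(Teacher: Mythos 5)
Your proposal is correct and follows essentially the same route as the paper's proof in Appendix~\ref{App_CT}: the Phase-1 fractional-decoding condition yields \eqref{eq_UT_DF}, Lemma~\ref{Lemma_WeiYu} applied to the block-diagonal equivalent channel and noise covariance (with the Phase-2 noise inflated by the Gaussian $X_2^D$) yields \eqref{eq_UT_coex}, and Costa's DPC with the transmitter-known interference at Node~3 yields \eqref{eq_UT_CR}. The only cosmetic difference is that you fold the self-relayed component into the known interference at Node~3, which is a slightly more explicit bookkeeping of the same step.
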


Note that in \cite{Viswanath_CR}, there is no Phase 3 ($t_3=0$)
and the relaying from SU is always ``noisy'' (interfered by SU's
own signal $X^D_2$) from \eqref{Eq_Vish_X2}. When the channel gain
$|h_{24}|$ is large, much of the SU's available power are used to
overcome the interference from SU's own signal and the
transmission may not be efficient ($\alpha_1$ is high). Also in
\cite{Viswanath_CR}, the assumption $|h_{12}| \gg |h_{14}|$ or
$I(X_1;Y_2) \gg R_T$ is made to ensure that $t_1$ can be
essentially neglected ($t_2=1$), and the SNR is almost the same
within a codeword. Thus the conventional Shannon channel coding
theorem in \cite{Book_Cover} can be invoked to ensure the
coexistence constraint. However, we usually have $t_1 \neq 0$ for
any more reasonable channel setting. Then the SNRs of Phase 1 and
2 are different at Node 4, that is, SNR changes in different block
of symbols. The Lemma \ref{Lemma_WeiYu}, which is more general
than that in \cite{Book_Cover}, is required to ensure that the
PU's rate is single-user achievable in Definition \ref{Def_SU}.
The insight of this Lemma is that even when Node 4 encounters bad
SNR at Phase 1, we can boost the SNR in Phase 2 to make the rate
over all phases unchanged. Note that since the equivalent channel
and noise change in different phases in the CT (also in its
special case where practical $t_1 \neq 0$ is introduced in
\cite{Viswanath_CR}), the decoder at Node 4 needs to be able to
track them. This can be done by the well-developed channel
estimation techniques in
\cite{tugnait2002single}\cite{otnes2004iterative}.

The optimization problem in \eqref{eq_UT_CR} is not convex in
$(t_2,\alpha_1)$ and the analytical solution is hard to obtain.
However, for fixed $t_2$, one can easily show that the optimal
$\alpha_1$ (function of $t_2$) is
\begin{equation} \label{eq_UT_alpha}
\alpha^*_1(t_2)=\left(\frac{-|h_{14}|\sqrt{P_1}+\sqrt{\left(1-|h_{14}|^2P_1+|h_{24}|^2P_2\right)K(t_2)+(1+|h_{24}|^2P_2)K^2(t_2)}}{|h_{24}|\sqrt{P_2}(1+K(t_2))}\right)^2,
\end{equation}
where $ K(t_2)=2^{\frac{1}{t_2}\left(
R_T-t_1C(|h_{14}|^2P_1)-(1-t_1-t_2)C(|h_{14}|^2P_1+|h_{34}|^2P_2)\right)}-1.
$ Note that if $t_1=t_3=0$, $K(1)=P_1$, then $\alpha^*_1(1)$ from
\eqref{eq_UT_alpha} equals to the one derived in
\cite{Viswanath_CR}. Since $0<t_2 \leq 1-t_1$, it is easy to find
the optimal $t_2$ maximizing \eqref{eq_UT_CR} by line search.

Now we study the multiplexing gain (or the pre-log factor)
\cite{zheng_tse_2003} of the SU, which is defined by
\begin{equation} \label{Eq_mul_def}
m_2=\lim_{\stackrel {\;}{\bar{P}_c} \rightarrow \infty}
R_2/\log\bar{P}_c,
\end{equation}
where $\bar{P}_c$ is the average transmission power utilized by
the SU. For the CT, $\bar{P}_c=\bar{P}_2$. The reason for
introducing $\bar{P}_c$ is to fairly compare the performance of CT
and CTR, of which the $\bar{P}_c$ is defined in the upcoming
\eqref{Eq_sum_power}. We focus only on the multiplexing gain of
the SU since that of the PU is unchanged with and without the
existence of SU due to the coexistence constraint. With
\eqref{eq_UT_CR} and \eqref{Eq_UT_power}, the upper bound of the
multiplexing gain of the CT can be easily found as \vspace{-2mm}
\begin{equation} \label{Eq_mul_CT}
m_2\leq 1-t_1= \left(
1-\frac{C(|h_{14}|^2\bar{P}_1)}{C(|h_{12}|^2\stackrel
{\;}{\bar{P}_1})}\right)^+.
\end{equation}
That is, the multiplexing gain is limited by the decoding time of
Phase 1, which is small when $|h_{12}|/|h_{14}|$ is small. This
motivates us to develop the CTR discussed in the next section.
\vspace{-3mm}
\section{Clean Transmitter-Receiver Relaying in Channels with full CSIT} \label{sec_UTR}
\vspace{-1mm} Although the proposed CT is more practical and
expected to outperform the cognitive radio in \cite{Viswanath_CR}
when $|h_{24}|$ is large, there are still some disadvantages.
First, when $|h_{12}|/|h_{14}|$ is small due to a deep fade from
Node 1 to Node 2 or a blockage in this signal path, from
\eqref{Eq_mul_CT}, the CT may fail since $t_1$ approaches 1. In
addition, the complexity of practical DPC implementation
\cite{DPC_Erez_IT05_RA} may still be inhibitive in current
communication systems. These problems motivate us to include the
clean relaying from Node 3, the SU's receiver, and develop the CTR
aided cognitive radio. We will show that with full CSIT, the
multiplexing gain of the CTR is no less than that of the CT (also
the special case \cite{Viswanath_CR}) with lower implementation
complexity. With only the statistics of the CSIT, the rate
performance of CTR is even more promising for fast Rayleigh faded
channels, as will be shown in Section \ref{sec_ergodic}. Again,
due to the half duplex constraint, the CTR transmission is
multi-phase since Node 2 and 3 cannot transmit/receive at the same
time.

The equivalent channel of each phase in the proposed CTR is
depicted in Fig.~\ref{Fig_UTR} (b). The basic design concept comes
as follows. After Phase 1, the PU's message $w_1$ is known by the
SU, and can be treated as a \textit{common message} for the PU and
SU. Thus in Phase 2, Node 3 faces an \textit{asymmetric} MAC with
a common message \cite{liu2006capacity}, since Node 3 also needs
to decode $w_1$ to enable clean relaying in Phase 3. Here the word
``asymmetric'' comes from the fact that the PU in this two-user
MAC can only transmit the common message $w_1$. The signaling
method (upcoming \eqref{Eq_UTR_X2}) in Phase 2 is then inspired
from the optimal signaling proposed in \cite{liu2006capacity}. Two
independent codebooks are used to transmit the private and common
messages $w_2$ and $w_1$ from Node 2, respectively. Note that we
can not use the signaling designed for the conventional
interference channels without the coexistence constraint such as
\cite{Wang_sIT09}, where the PU's receiver needs to decode part of
the SU's messages to get good rate performance. The detailed
signaling method for each phase comes as the following.

\noindent \textbf{Phase 1}: In the first $\lfloor t_1n \rfloor $
symbols, Node 2 and 3 listen to the PU's message $w_1$. Node 2
decodes $w_1$.

\noindent \textbf{Phase 2}: Within the next $\lfloor t_2n \rfloor$
symbols, Node 2 transmits
\begin{equation} \label{Eq_UTR_X2}
    X_2=U_2+\sqrt{\alpha_1\frac{P_2}{P_1}}\e^{j\theta_1}X_1,
\end{equation}
where $U_2$ is the signal bearing SU's message $w_2$ and is
independent of $X_1$, while $\alpha_1$ and $\theta_1$ are the
relaying ratio and phase for the common message $w_1$,
respectively. Node 3 decodes both $w_1$ and $w_2$.

\noindent \textbf{Phase 3}: For the remaining $ t_3n=n-\lfloor
t_1n \rfloor-\lfloor t_2n \rfloor $ symbols, the clean relaying
signals are transmitted from Node 2 and 3 as \vspace{-3mm}
\begin{equation} \label{Eq_UTR_X2_Phase2_2}
X_2= \sqrt{P^{(3)}_2/P_1}\e^{j(\theta_{14}-\theta_{24})}X_1, \;\;
X_3= \sqrt{P_3/P_1} \e^{j(\theta_{14}-\theta_{34})}X_1,
\end{equation}
respectively, where $P^{(3)}_2$ and $P_3$ are the transmitted
power of Node 2 and 3 at Phase 3 respectively. \\ To satisfy the
power constraint \eqref{Eq_avg_power}, we have \vspace{-3mm}
\begin{equation} \label{Eq_UTR_power}
P_1=\bar{P}_1, \;\; t_2P_2+t_3P^{(3)}_2=\bar{P}_2,\;\;\mbox{and}
\;\; t_3P_3=\bar{P}_3.
\end{equation}
It was shown in \cite{liu2006capacity} that other than the
complicated scheme in \cite{slepian1973coding}, the simple
signaling \eqref{Eq_UTR_X2} is also optimal for the Gaussian MAC
with common message. The low complexity advantage of our CTR is
then inherited from \cite{liu2006capacity}.

To calculate the achievable rate of the CTR, first note that the
received SNRs of $X_1$ and $U_2$ at Node 3 both change at Phase 1
and 2. Then we need the following Lemma from \cite{pombra1994non}.
Although Lemma \ref{Lemma_WeiYuMAC} is an extension of the
achievable rate in Lemma \ref{Lemma_WeiYu} to the MAC setting, in
Lemma \ref{Lemma_WeiYu}, we need to further prove that the rate is
\textit{single-user achievable} to meet the coexistence constraint
in Node 4. However, such requirement is not needed for Node 3
where Lemma \ref{Lemma_WeiYuMAC} is applied.

\begin{lemma} \label{Lemma_WeiYuMAC}
For a block of $n$ transmissions over the MAC $Y^n=\bH^n_x
X^n+\bH^n_uU^n+Z^n$, where the channel matrices $\bH^n_x$ and
$\bH^n_u$ are diagonal and known perfectly at the receiver, and
$Z^n$ is a Gaussian random sequence with covariance matrix
$K_{Z^n}$, the rate pair $(R_1,R_2)$ is achievable for Gaussian
codebooks if
\begin{align}
&R_1 \leq \frac{1}{n}\log\frac{|\bH^n_x K_{x^n}(\bH^n_x)^{H}+K_{z^n}|}{|K_{z^n}|}, \label{eq_PomMAC_R1}\\
R_2 \leq \frac{1}{n}\log\frac{|\bH^n_u
K_{u^n}(\bH^n_u)^{H}+K_{z^n}|}{|K_{z^n}|}&, \; R_1+R_2 \leq
\frac{1}{n}\log\frac{|\bH^n_x K_{x^n}(\bH^n_x)^{H}+\bH^n_u
K_{u^n}(\bH^n_u)^{H}+K_{z^n}|}{|K_{z^n}|}, \label{eq_PomMAC_R1R2}
\end{align}
where the covariance matrices $K_{x^n}$ and $K_{u^n}$ of the
transmitted signals $X^n$ and $U^n$ satisfy the power constraints,
respectively.
\end{lemma}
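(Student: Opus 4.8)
The plan is to view the entire block of $n$ channel uses as a single use of a vector Gaussian multiple-access channel and to invoke a generalized asymptotic equipartition property (AEP) for arbitrary Gaussian processes, exactly paralleling the single-user argument behind Lemma \ref{Lemma_WeiYu}. First I would observe that the three right-hand sides in \eqref{eq_PomMAC_R1}--\eqref{eq_PomMAC_R1R2} are precisely the block-normalized mutual informations $\frac{1}{n}I(X^n;Y^n\mid U^n)$, $\frac{1}{n}I(U^n;Y^n\mid X^n)$, and $\frac{1}{n}I(X^n,U^n;Y^n)$ evaluated at the independent Gaussian inputs $X^n\sim\mathcal{CN}(\mathbf{0},K_{x^n})$ and $U^n\sim\mathcal{CN}(\mathbf{0},K_{u^n})$. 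Hence the claimed region is nothing but the standard two-user MAC region written for this vector channel, and only achievability with Gaussian codebooks is required.

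Next I would carry out the usual MAC random-coding argument at the block level. I would draw the two codebooks independently from $\mathcal{CN}(\mathbf{0},K_{x^n})$ and $\mathcal{CN}(\mathbf{0},K_{u^n})$ so that the power constraints hold, and decode by joint typicality of the triple $(X^n,U^n,Y^n)$. The error analysis then splits into the three familiar events---wrong $X^n$ only, wrong $U^n$ only, and both wrong---whose probabilities are controlled by the $R_1$, $R_2$, and sum-rate constraints respectively. The only nonstandard ingredient is that, because the diagonal gains and the noise covariance vary across symbols, the sequences are neither i.i.d. nor stationary, so the ordinary AEP does not apply. I would therefore replace it with the Gaussian AEP for arbitrary (correlated, nonstationary) processes established in \cite{pombra1994non}, specialized to the no-feedback case, which furnishes exactly the jointly-typical packing bounds needed to drive each error probability to zero as $n\to\infty$.

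The determinant expressions then follow by a direct entropy computation. Using $h(\mathcal{CN}(\mathbf{0},K))=\log\bigl((\pi\e)^n|K|\bigr)$ for an $n$-dimensional complex Gaussian, conditioning on $U^n$ lets me subtract the known term $\bH^n_u U^n$, so that $I(X^n;Y^n\mid U^n)=h(\bH^n_x X^n+Z^n)-h(Z^n)=\log\frac{|\bH^n_x K_{x^n}(\bH^n_x)^H+K_{z^n}|}{|K_{z^n}|}$, which yields \eqref{eq_PomMAC_R1} after dividing by $n$; the $R_2$ bound is symmetric, and for the sum rate one keeps both inputs and uses their independence, so the covariance of $\bH^n_x X^n+\bH^n_u U^n$ is the sum $\bH^n_x K_{x^n}(\bH^n_x)^H+\bH^n_u K_{u^n}(\bH^n_u)^H$. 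The diagonal gains enter only through these effective covariances, so they may equivalently be absorbed into $\tilde X^n=\bH^n_x X^n$ and $\tilde U^n=\bH^n_u U^n$, reducing the problem to the non-white Gaussian MAC treated in \cite{pombra1994non}.

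The hard part will be the generalized AEP step: justifying that joint typicality still concentrates and that the packing bounds hold when the effective channel and noise are colored and time-varying across the block. Once that technical result is borrowed from \cite{pombra1994non}, the remainder is routine. I would also note, as remarked just before the statement, that no additional single-user-achievability refinement is needed here (unlike in Lemma \ref{Lemma_WeiYu}), since this rate region is only used for decoding at Node 3 rather than for meeting the coexistence constraint at Node 4, so the jointly-typical decoder suffices.
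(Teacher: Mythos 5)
Your proposal is correct and matches the paper's intent: the paper gives no proof of this lemma beyond importing it from \cite{pombra1994non}, and your sketch --- block-level MAC random coding with jointly typical decoding, the generalized AEP for nonstationary Gaussian processes supplying the packing bounds, and the conditional-entropy computation yielding the three determinant expressions --- is exactly the argument the paper points to, paralleling its own description of the proof of Lemma \ref{Lemma_WeiYu}. You also correctly note why no single-user-achievability refinement is needed at Node 3, which is the one substantive remark the paper itself makes about this lemma.
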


By combining the results in \cite{liu2006capacity} and Lemma
\ref{Lemma_WeiYuMAC} as well as using Lemma \ref{Lemma_WeiYu}, we
can choose PU's and SU's codebooks which can simultaneously ensure
successful decoding at Node 3, and meet the coexistence constraint
at Node 4. We then have the following achievable rate of the CTR
in Theorem \ref{Theorem_UTR}. Here the rate $R'_T$ can be treated
as, after Phase 1, the residual information flow of $w_1$ to be
decoded at Node 3; while $u_{T \!\! x}$ and $u_{Rx}$ in the end of
the theorem statement indicate whether the relaying from
transmitter and receiver are possible, respectively.

\begin{theorem} \label{Theorem_UTR}
With full CSIT and the transmitted power setting as
\eqref{Eq_UTR_power}, the following rate of the SU is achievable
by the CTR \vspace{-3mm}
\begin{align} \label{eq_UTR_CR}
R_2 \leq \max_{\theta_1,\,t_2,\,\alpha_1} \Big \{ \min \Big [ \;\; & t_2 \cdot C\left(|h_{23}|^2(1-\alpha_1)P_2\right), \notag\\
 &t_2 \cdot C \left(\big|h_{23}\sqrt{\alpha_1P_2/P_1}\e^{j
\theta_1}+h_{13}\big|^2P_1+|h_{23}|^2(1-\alpha_1)P_2\right)-R^m_T
\Big ] \Big \},
\end{align}
where $R^m_T=\min\left\{ R'_T,\,t_2 \cdot C
\left(\big|h_{23}\sqrt{\alpha_1P_2/P_1}\e^{j
\theta_1}+h_{13}\big|^2P_1\right)\right\}$ with $R'_T \triangleq
R_T-t_1C(|h_{13}|^2P_1)$, and is subject to the constraint for
coexistence with \vspace{-2mm}
\begin{align} \label{eq_UTR_coex}
R_T\leq\, &t_1 \cdot
C(|h_{14}|^2P_1)+t_2 \cdot C\left(\frac{\big|h_{14}+h_{24}\sqrt{\alpha_1P_2/P_1}\e^{j\theta_1}\big|^2P_1}{1+|h_{24}|^2(1-\alpha_1)P_2}\right)\notag\\
&+(1-t_1-t_2)C\left(\left(|h_{14}|\sqrt{P_1}+|h_{24}|\sqrt{P^{(3)}_2}+|h_{34}|\sqrt{P_3}\right)^2\right),
\end{align}
where the common message relaying ratio and phase are $\alpha_1$
and $\theta_1$, and the time fractions of Phase 1 and 2 are $t_1$
and $t_2$, respectively. Moreover, let $u_{T \!\!
x}=\mathbf{1}_{t_1>R_T/C(|h_{12}|^2P_1)}$, and
$u_{Rx}=\mathbf{1}_{t_2C(|h_{23}\sqrt{\alpha_1P_2/P_1}\e^{j
\theta_1}+h_{13}|^2P_1)\geq R_T'}$, $t_1$, $\alpha_1$, and
$P^{(3)}_2$ are all zero when $u_{T \!\! x}=0$, while $P_3=0$ when
$u_{Rx}=0$.
\end{theorem}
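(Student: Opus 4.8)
The plan is to build the CTR achievability from a single choice of Gaussian codebooks and to verify that it simultaneously (i) lets Node 3 decode the Phase-2 MAC with common message using the optimal signaling of \cite{liu2006capacity}, and (ii) keeps the length-$n$ primary codeword single-user achievable at Node 4 across the three phases via Lemma \ref{Lemma_WeiYu}. First I would set up the Phase-2 channel at Node 3 as an asymmetric MAC with common message: the common message $w_1$ reaches Node 3 both directly through $h_{13}X_1$ and through the relayed component $h_{23}\sqrt{\alpha_1P_2/P_1}\e^{j\theta_1}X_1$, so its effective gain is $h_{23}\sqrt{\alpha_1P_2/P_1}\e^{j\theta_1}+h_{13}$, while the private message $w_2$ is carried by $U_2$ through $h_{23}$ with power $(1-\alpha_1)P_2$. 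Adopting the signaling of \cite{liu2006capacity} yields the private, common, and sum-rate constraints of this channel; because the received SNRs of both $X_1$ and $U_2$ differ between Phase 1 and Phase 2, I would invoke Lemma \ref{Lemma_WeiYuMAC} with block-diagonal channel matrices to combine the per-phase contributions rather than the i.i.d. Shannon theorem.

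Next I would track the common-message bookkeeping. Node 3 already recovers a fraction of $w_1$ at rate $t_1 C(|h_{13}|^2 P_1)$ during Phase 1, leaving the residual $R'_T = R_T - t_1 C(|h_{13}|^2 P_1)$ to be carried by the Phase-2 MAC. The common rate actually decoded in Phase 2, $R^m_T$, is then the smaller of what is needed, $R'_T$, and what the common sub-channel can support, $t_2 C(|h_{23}\sqrt{\alpha_1P_2/P_1}\e^{j\theta_1}+h_{13}|^2 P_1)$, which is exactly the stated $\min$ defining $R^m_T$. Substituting $R^m_T$ into the private and sum constraints of the MAC region produces the two-term inner minimum in \eqref{eq_UTR_CR}: the private bound $t_2 C(|h_{23}|^2(1-\alpha_1)P_2)$ and the sum-minus-common bound $t_2 C(\cdot) - R^m_T$.

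For the coexistence constraint I would apply Lemma \ref{Lemma_WeiYu} at Node 4 with a three-block-diagonal effective channel, one block per phase. In Phase 1 only $X_1$ is present, giving per-symbol SNR $|h_{14}|^2 P_1$; in Phase 2 the direct and relayed common components combine coherently while $U_2$ acts as interference, giving effective SNR $\big|h_{14}+h_{24}\sqrt{\alpha_1P_2/P_1}\e^{j\theta_1}\big|^2 P_1/(1+|h_{24}|^2(1-\alpha_1)P_2)$; in Phase 3 all three clean relays add coherently through the phase alignment in \eqref{Eq_UTR_X2_Phase2_2}, giving effective power $(|h_{14}|\sqrt{P_1}+|h_{24}|\sqrt{P^{(3)}_2}+|h_{34}|\sqrt{P_3})^2$. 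Since the determinant of a block-diagonal matrix factorizes, the right-hand side of Lemma \ref{Lemma_WeiYu} becomes the sum of the three per-phase terms, so $R_T$ is single-user achievable precisely when that sum is at least $R_T$, which is \eqref{eq_UTR_coex}. The feasibility indicators close the argument: $u_{T\!\!x}$ records whether Phase 1 is long enough for Node 2 to decode $w_1$ (i.e. $t_1 C(|h_{12}|^2 P_1) > R_T$), and if not the transmitter relaying is switched off by forcing $t_1=\alpha_1=P^{(3)}_2=0$; $u_{Rx}$ records whether Node 3 fully recovers $w_1$ by the end of Phase 2, a prerequisite for clean relaying from Node 3 in Phase 3, and if not we set $P_3=0$.

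The main obstacle I anticipate is the simultaneous compatibility of the two decoding requirements on the same $X_1$ codeword: the primary codeword must be single-user decodable at Node 4 under the phase-varying SNR (Lemma \ref{Lemma_WeiYu}), and at the same time serve as the common component of the MAC decoded at Node 3 (Lemma \ref{Lemma_WeiYuMAC} together with \cite{liu2006capacity}). Verifying that one family of Gaussian codebooks meets both conditions, and that the coherent phase alignments used at Node 4 do not conflict with the relaying phase $\theta_1$ optimized for decoding at Node 3, is the delicate step; the remaining per-phase rate evaluations are routine applications of the two lemmas.
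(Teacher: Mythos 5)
Your overall route is the same as the paper's: the Phase-2 channel at Node 3 is treated as an asymmetric MAC with common message using the signaling of \cite{liu2006capacity}, the phase-varying SNRs are handled by Lemma \ref{Lemma_WeiYuMAC} with block-diagonal channel matrices, and the coexistence constraint \eqref{eq_UTR_coex} follows from Lemma \ref{Lemma_WeiYu} applied to the three-block effective channel at Node 4, whose determinant factorizes into the three per-phase terms. Your bookkeeping of the Phase-1 contribution $t_1 C(|h_{13}|^2P_1)$ and the residual $R'_T$ also matches the paper.

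There is, however, one genuine gap: your interpretation of $R^m_T$ as ``the common rate actually decoded in Phase 2, \ldots the smaller of what is needed and what the common sub-channel can support.'' When $t_2 C\bigl(\bigl|h_{23}\sqrt{\alpha_1P_2/P_1}\e^{j\theta_1}+h_{13}\bigr|^2P_1\bigr) < R'_T$ (i.e.\ $u_{Rx}=0$), Node 3 cannot decode $w_1$ at all --- there is no mechanism here for decoding ``part'' of the common message, since $w_1$ is not rate-split. Simply substituting the smaller value into the MAC sum-rate constraint is therefore not a valid decoding argument. The paper handles this case by switching strategies: Node 3 treats the entire $X_1$ contribution as Gaussian noise when decoding $w_2$, achieving \eqref{EQ_CR_treat_interference_as_noise}, and then verifies algebraically that this rate coincides with the second argument of the minimum in \eqref{eq_UTR_CR} once $R^m_T$ is set to $t_2 C\bigl(\bigl|h_{23}\sqrt{\alpha_1P_2/P_1}\e^{j\theta_1}+h_{13}\bigr|^2P_1\bigr)$ (and also checks that the first argument of the minimum is never active in this regime). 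So the unified formula for $R^m_T$ is a notational device covering two distinct decoders, not a single decoder that adapts its common rate; your proof needs this explicit case split (and, for completeness, the degenerate cases $u_{T\!x}=0$, $u_{Rx}=0$ where $P_2$ must vanish and $R_2=0$). The remaining concern you raise --- compatibility of the two decoding requirements on the same $X_1$ codebook --- is resolved in the paper simply by fixing one Gaussian codebook for $X_1$ and checking that both Lemma \ref{Lemma_WeiYu} at Node 4 and Lemma \ref{Lemma_WeiYuMAC} at Node 3 hold for it; the relaying phase $\theta_1$ enters only the Phase-2 blocks of both effective channels and is a free optimization variable in both \eqref{eq_UTR_CR} and \eqref{eq_UTR_coex}, so no conflict arises.
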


\begin{proof}
We first consider the case where $u_{T \!\! x}=1$ and $u_{Rx}=1$.
In this case, both Node 2 and 3 are capable of relaying with
$\alpha_1 \geq 0$, $P^{(3)}_2 \geq 0$ and $P_3 \geq 0$. As
explained in the beginning of Section \ref{sec_UTR}, Node 3 faces
an asymmetric MAC with common message $w_1$ and private message of
SU $w_2$. From \cite{liu2006capacity}, we know that one should
choose $X_1$ and $U_2$ independent and Gaussian distributed with
variance $P_1$ and $(1-\alpha_1)P_2$, respectively. The codebooks
of PU and SU are generated according to $X_1$ and $U_2$ with rate
$R_T$ and $R_2$, respectively. As suggested in
\cite{liu2006capacity}, the equivalent channel at Node 3 is
similar to a common two-user MAC without common message as in
\cite{Book_Cover}. However, as explained previously, the
difference between this MAC and that in \cite{Book_Cover} is that
both the SNRs of $X_1$ and $U_2$ at Node 3 vary during Phase 1 and
2. Then we need Lemma \ref{Lemma_WeiYuMAC} which is more general
than \cite{Book_Cover} to ensure correct decoding, with
$K_{u^n}=(1-\alpha_1)P_2\mathbf{I}_{n}$,$\;K_{x^n}=P_1\mathbf{I}_{n}$,
\[
\bH^n_u=\mathrm{diag}\left(0 \cdot \mathbf{I}_{\lfloor t_1n
\rfloor}, h_{23} \mathbf{I}_{\lfloor t_2n \rfloor},0 \cdot
\mathbf{I}_{\lfloor t_3n \rfloor} \right), \;
\bH^n_x=\mathrm{diag}\left(h_{13}\mathbf{I}_{\lfloor t_1n
\rfloor}, \; \left(h_{23}\sqrt{\alpha_1P_2/P_1}\e^{j
\theta_1}+h_{13}\right) \mathbf{I}_{\lfloor t_2n \rfloor},0 \cdot
\mathbf{I}_{\lfloor t_3n \rfloor} \right),
\]
and $K_{z^n}=\mathbf{I}_n,$ where \eqref{Eq_UTR_X2} in Phase 2 and
the channel model in \eqref{Eq_channel} are used. Then from
\eqref{eq_PomMAC_R1R2} in Lemma \ref{Lemma_WeiYuMAC}, the
following rate constraints apply for the correctly decoding of
$(w_1,w_2)$ at Node 3 in Phase 2 \vspace{-2mm}
\begin{align}
R_2& \leq t_2 \cdot C\left(|h_{23}|^2(1-\alpha_1)P_2\right), \notag \\
R_2+R_T &\leq t_1\cdot C(|h_{13}|^2P_1)+t_2 \cdot C
\left(|h_{23}\sqrt{\alpha_1P_2/P_1}\e^{j
\theta_1}+h_{13}|^2P_1+|h_{23}|^2(1-\alpha_1)P_2\right).
\label{eq_UTR_MAC_Cos}
\end{align}
With the above two inequalities, we have \eqref{eq_UTR_CR} with
$R_T^m=R_T-t_1C(|h_{13}|^2P_1)$. Since $u_{Rx}=1$,
$R_T^m=R_T'=R_T-t_1C(|h_{13}|^2P_1)$ by construction. Similarly,
with $u_{Rx}=1$ or $t_2C(|h_{23}\sqrt{\alpha_1P_2/P_1}\e^{j
\theta_1}+h_{13}|^2P_1)\geq R_T'$, inequality \eqref{eq_PomMAC_R1}
in Lemma \ref{Lemma_WeiYuMAC} is met by applying the above
procedure. The decoding of $(w_1,w_2)$ will then be successful. To
ensure the coexistence, by invoking Lemma \ref{Lemma_WeiYu},
\eqref{Eq_UTR_X2}, \eqref{Eq_UTR_X2_Phase2_2}, and
\eqref{Eq_channel}, and following the steps in Appendix
\ref{App_CT}, one can obtain \eqref{eq_UTR_coex}.

Now we consider the case $u_{T \!\! x}=1$ and $u_{Rx}=0$. It
happens when $R_T$ is too large for the MAC decoder in Node 3 to
successfully decode $w_1$. Then there is only relaying from Node 2
and no relaying from Node 3 at Phase 3 ($P_3=0$). With $X_1$ and
$U_2$ as described previously, Node 3 treats the PU's signal $X_1$
as pure Gaussian noise when decoding $w_2$. The achievable rate
$R_2$ is then
\begin{equation}\label{EQ_CR_treat_interference_as_noise}
t_2 \cdot C
\left(\frac{|h_{23}|^2(1-\alpha_1)P_2}{1+\big|h_{23}\sqrt{\alpha_1P_2/P_1}\e^{j
\theta_1}+h_{13}\big|^2P_1}\right).
\end{equation}
Note that \eqref{EQ_CR_treat_interference_as_noise} can be
rearranged as the second argument of the minimum in
\eqref{eq_UTR_CR} with \vspace{-2mm}
\begin{equation}\label{EQ_UR0_R1C}
R_T^m=t_2C(\big|h_{23}\sqrt{\alpha_1P_2/P_1}\e^{j
\theta_1}+h_{13}\big|^2P_1).
\end{equation}
When $u_{Rx}=0$, our definition of $R_T^m$ in the Theorem
statement will make \eqref{EQ_UR0_R1C} valid. Also the minimum in
\eqref{eq_UTR_CR} always equals to
\eqref{EQ_CR_treat_interference_as_noise} since $t_2 \cdot C
\left(|h_{23}|^2(1-\alpha_1)P_2\right)$ is always larger than
\eqref{EQ_CR_treat_interference_as_noise}, and
\eqref{EQ_CR_treat_interference_as_noise} equals to the second
argument in the minimum of \eqref{eq_UTR_CR} with
\eqref{EQ_UR0_R1C}.

Finally, we consider $u_{T \!\! x}=0$ and $u_{Rx}=1$, which
results in $\alpha_1=t_1=P^{(3)}_2=0$ and Node 2 cannot relay
$X_1$. However, as long as the clean relaying from Node 3 can
satisfy the coexistence
 constraint with $P_3>0$, the SU still can have non-zero rate.
Now Node 3 faces a conventional MAC channel without common message
and varying SNRs as in \cite{Book_Cover}. The analysis for $u_{T
\!\! x}=u_{Rx}=1$ includes this case as a special case, and
(\ref{eq_UTR_coex}) and (\ref{eq_UTR_CR}) are also valid. As for
cases where $u_{T \!\! x}=u_{Rx}=0$, $P_2$ must be zero to satisfy
(\ref{eq_UTR_coex}) since there is no relaying
$\alpha_1=P^{(3)}_2=P_2=0$. The SU's rate is zero from
(\ref{eq_UTR_CR}), and this concludes the proof.
\end{proof}
\noindent The optimization problem in Theorem \ref{Theorem_UTR} is
non-convex even
    when $t_2$ is given. However, since all variables are bounded, the complexity
     of numerical line search is still acceptable.

Note that our CTR uses different coding scheme compared with the
CT, and does not always guarantee rate advantage over CT under
full CSIT assumption. However, unlike the CT, even if
\eqref{eq_UT_DF} is violated and $u_{T \!\! x}=0$, the CTR may
still meet the coexistence constraint with only the relaying from
Node 3 ($u_{Rx}=1$). Even when $u_{T \!\! x}=1$, if Node 2 needs
too much time to decode $w_1$, setting $t_1=0$ in CTR (pure
receiver relaying) may has rate advantage over the CT. This
observation is verified in the upcoming high SNR analysis, where
the multiplexing gain of the CTR is shown to be larger than that
of the CT. In this analysis, the $\bar{P}_c$ in \eqref{Eq_mul_def}
equals to the sum of the average transmitted power from Node 2 and
3 (or total energy consumption of the SU, equivalently). From
\eqref{Eq_UTR_power}, $\bar{P}_c$ equals to \vspace{-2mm}
\begin{equation} \label{Eq_sum_power}
\bar{P}_c=t_2P_2+t_3P^{(3)}_2+ t_3P_3=\bar{P}_2+\bar{P}_3.
\end{equation}
Now we have the following Corollary with the proof given in
Appendix \ref{App_mul_UTR}.

\begin{coro} \label{Theorem_mul_UTR}
With full CSIT, the following multiplexing gain of the SU is
achievable by the CTR under the power constraints
\eqref{Eq_UTR_power} \vspace{-3mm}
\begin{equation}\label{eq_mul_UTR}
\max \left\{m,\left(1-\frac{C(|h_{14}|^2\stackrel
{\;}{\bar{P}_1})}{C(|h_{12}|^2\stackrel {\;}{\bar{P}_1})}\right)^+
\right\},
\end{equation}
where \vspace{-6mm}
\begin{equation} \label{Eq_UTR_MG}
m=\Bigg\{\begin{array}{ll}1-t_3, \mbox{ for any $t_3$}\in
\left(0,\,1-\left(C(|h_{14}|^2\stackrel
{\;}{\bar{P}_1})/C(|h_{13}|^2\stackrel
{\;}{\bar{P}_1})\right)\right], &\mbox{ when } |h_{14}|<|h_{13}|,
\vspace{-3mm}
\\ 0,& \mbox{    otherwise.}\end{array}
\end{equation}
\vspace{-5mm}
\end{coro}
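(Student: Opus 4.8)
The plan is to compute the multiplexing gain by taking the high-SNR limit of the achievable rate $R_2$ in Theorem~\ref{Theorem_UTR}, treating the two dominant operating regimes separately and then taking their maximum. Since the multiplexing gain counts only the pre-log factor in $\bar P_c$, and all channel gains are fixed constants, every term of the form $C(|h_{ij}|^2 P) = \log(1+|h_{ij}|^2 P)$ behaves like $\log\bar P_c$ as $\bar P_c \to \infty$ (after absorbing the fixed power-split fractions $t_2,t_3$ into the argument, which only contribute $O(1)$ terms). The second element of the outer maximum in \eqref{eq_mul_UTR}, namely $(1 - C(|h_{14}|^2\bar P_1)/C(|h_{12}|^2\bar P_1))^+$, is exactly the CT multiplexing gain \eqref{Eq_mul_CT}, which the CTR inherits by specializing to $u_{Rx}=0$ (pure transmitter relaying). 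So the real work is to establish that the receiver-relaying branch achieves the quantity $m$ in \eqref{Eq_UTR_MG}.

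\textbf{Receiver-relaying branch.} First I would set $t_1=0$ (pure receiver relaying, so $u_{T\!\!x}=0$, $\alpha_1=0$) and examine the two arguments of the inner $\min$ in \eqref{eq_UTR_CR}. With $\alpha_1=0$ the first argument is $t_2\,C(|h_{23}|^2 P_2)$ and, since $P_2$ scales with $\bar P_c$, its pre-log is $t_2$. For the second argument I would track $R^m_T$: with $t_1=0$ one has $R'_T = R_T$, a \emph{fixed} target that does not grow with $\bar P_c$, so $R^m_T = O(1)$ and the second argument also has pre-log $t_2$. Thus the SU rate has pre-log $t_2 = 1 - t_3$ in this branch, \emph{provided} the coexistence constraint \eqref{eq_UTR_coex} can be met. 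The crux is therefore to determine for which $t_3$ the fixed target $R_T$ remains feasible: in \eqref{eq_UTR_coex} with $t_1=\alpha_1=0$, the Phase-2 term contributes $t_2\,C(|h_{14}|^2 P_1)$ and the Phase-3 combining term contributes $t_3\,C((|h_{14}|\sqrt{P_1}+|h_{34}|\sqrt{P_3})^2)$. Here $P_1=\bar P_1$ is fixed while $P_3$ grows like $\bar P_c/t_3$, so the Phase-3 term scales like $t_3\log\bar P_c \to \infty$. Hence for \emph{any} fixed $t_3>0$ the coexistence constraint is asymptotically satisfiable, and one obtains pre-log $1-t_3$.

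\textbf{Identifying the feasible range of $t_3$ and the role of Node~3 decoding.} The subtlety — and the main obstacle — is that receiver relaying requires $u_{Rx}=1$, i.e.\ Node~3 must itself decode $w_1$ during Phase~2 to enable clean relaying in Phase~3. With $t_1=\alpha_1=0$ this is the condition $u_{Rx}=\mathbf 1_{t_2 C(|h_{13}|^2 P_1)\ge R'_T}$; more generally along this branch the decoding of $w_1$ at Node~3 must absorb the residual rate. I expect the binding requirement to be that Node~3 can decode $w_1$ in the available time, which couples $t_3$ to the ratio $C(|h_{14}|^2\bar P_1)/C(|h_{13}|^2\bar P_1)$: the fraction of time Node~3 needs to learn $w_1$ through the $h_{13}$ link relative to the information content of $w_1$ (which, to keep the PU at rate $R_T$ matched to its own direct link, scales with $C(|h_{14}|^2\bar P_1)$). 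Carrying this through yields the stated interval $t_3\in(0,\,1-C(|h_{14}|^2\bar P_1)/C(|h_{13}|^2\bar P_1)]$ and the condition $|h_{14}|<|h_{13}|$ needed for this interval to be nonempty (otherwise Node~3 cannot decode fast enough and $m=0$). I would make the limiting argument rigorous by writing each rate term as $a\log\bar P_c + O(1)$, dividing by $\log\bar P_c$, and sending $\bar P_c\to\infty$, so that all additive $O(1)$ constants and the fixed $R_T$ vanish in the limit. The final step is simply to combine the receiver-relaying pre-log $m$ with the transmitter-relaying pre-log via the outer maximum, giving \eqref{eq_mul_UTR}. The delicate bookkeeping will be in pinning down exactly which constraint fixes the upper endpoint of the $t_3$ interval — i.e.\ correctly accounting for the Node~3 decoding requirement versus the coexistence requirement — since both involve $t_3$ but only one is binding in the high-SNR limit.
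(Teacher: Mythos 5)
Your overall strategy matches the paper's: analyze pure receiver relaying ($t_1=\alpha_1=0$) and pure transmitter relaying ($t_3=0$) separately and take the maximum of the two pre-logs, with the upper endpoint of the $t_3$ interval in $m$ pinned down by Node 3's ability to decode $w_1$ in Phase 2, i.e.\ $t_2\,C(|h_{13}|^2\bar P_1)\ge R_T$ with $R_T$ set to its maximal value $C(|h_{14}|^2\bar P_1)$. Your receiver-relaying branch is essentially the paper's argument, up to one slip: with $\alpha_1=0$ the Phase-2 term of \eqref{eq_UTR_coex} is $t_2\,C\bigl(|h_{14}|^2P_1/(1+|h_{24}|^2P_2)\bigr)$, which vanishes as $\bar P_c\to\infty$ because $U_2$ is sent at full power and interferes at Node 4 --- it is not $t_2\,C(|h_{14}|^2P_1)$ as you wrote. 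Your conclusion survives because, as you observe, the Phase-3 term alone grows like $t_3\log\bar P_c$ and carries the coexistence constraint for any fixed $t_3>0$.

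The genuine error is in the transmitter-relaying branch, which you dispatch by saying the CTR ``inherits'' the CT pre-log ``by specializing to $u_{Rx}=0$.'' With $u_{Rx}=0$ Node 3 treats the PU's signal, including the relayed component at power $\alpha_1 P_2$, as noise, and the achievable rate is \eqref{EQ_CR_treat_interference_as_noise}. Since meeting the coexistence constraint with $t_3=0$ forces $\alpha_1$ to stay bounded away from zero (the paper shows $\alpha_1>|h_{14}|^2\bar P_1/(1+|h_{14}|^2\bar P_1)$ in the high-SNR limit), the SINR at Node 3 saturates and this branch would have pre-log \emph{zero}, not $1-C(|h_{14}|^2\bar P_1)/C(|h_{12}|^2\bar P_1)$. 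The paper instead keeps the MAC decoder at Node 3 active: $u_{Rx}=1$ holds automatically at high SNR because $t_2\,C(|h_{23}\sqrt{\alpha_1P_2/P_1}e^{j\theta_1}+h_{13}|^2P_1)\to\infty$ while $R_T'$ stays fixed, so the common-message interference is decoded and removed, and only then does the min in \eqref{eq_UTR_CR} have pre-log $t_2=1-t_1$ with $t_1=C(|h_{14}|^2\bar P_1)/C(|h_{12}|^2\bar P_1)$. Pure transmitter relaying means $t_3=0$ and $\bar P_3=0$, not $u_{Rx}=0$; as written, your argument for the second element of the outer maximum would fail.
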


Indeed, according to Appendix \ref{App_mul_UTR}, the multiplexing
gains $m$ and $(1-C(|h_{14}|^2\stackrel
{\;}{\bar{P}_1})/C(|h_{12}|^2\stackrel {\;}{\bar{P}_1}))^+$
correspond to the CTR using pure receiver ($t_1=0$) and pure
transmitter ($t_3=0$) relaying, respectively. Comparing
\eqref{eq_mul_UTR} and \eqref{Eq_mul_CT}, we know that with full
CSIT, the multiplexing gain of the CTR is larger (or at least no
less) than that of the CT (also its special case in
\cite{Viswanath_CR}). In the next section, we will investigate the
performance of the CTR and CT in fast Rayleigh faded channels with
only the statistics of CSIT. The CTR is even more promising in
this setting. \vspace{-3mm}
\section{Performance in Fast Rayleigh Fading Channels with Statistics of CSIT} \label{sec_ergodic}
\vspace{-1mm} We will first show that the performance of CT (and
its special case \cite{Viswanath_CR}) has rate performance worse
than that of the CTR. Then we focus on the CTR and its achievable
rate. The optimal common message relaying ratio $\alpha_1$ will
also be investigated. First, for the precoding for the CT, it was
shown that the linear-assignment Gel'fand-Pinsker coding (LA-GPC)
\cite{DPC_GelPin} outperforms the DPC in Ricean-faded cognitive
channels with the statistics of CSIT \cite{scCR_TWCOMs09}. This is
because the LA-GPC, which includes the DPC as a special case, does
not need the full CSIT as the DPC in designing the precoding
paramteters. However, for Rayleigh fading channels with only the
statistics of CSIT, we observe that even the more general LA-GPC
results in a rate performance the same as that of treating
interference as noise. So the CTR will outperform the DPC based CT
in this channel setting. With a little abuse of notations, the
above observation can be found as the following proposition with
the proof given in Appendix \ref{App_bad_DPC}.

\begin{proposition} \label{Propo_bad_DPC}
With only the statistics of CSIT, for the ergodic Rayleigh faded
channel $Y_3=H_{23}X_2+H_{13}X_1+Z_3$ with transmitter
side-information $X_1$ and power constraints $\E[|X_1|^2] \leq
\bar{P}_1,\E[|X_2|^2] \leq \bar{P}_2$, the maximal achievable rate
of the LA-GPC coded $X_2$ is the same as the rate obtained by
treating the interference $H_{13}X_1$ as noise, which is
\vspace{-5mm}
\begin{equation} \label{Eq_LA_GPC}
\E\left[C
\left(\frac{|H_{23}|^2\bar{P}_2}{1+|H_{13}|^2\stackrel
{\;}{\bar{P}_1}}\right)\right].
\end{equation}
\end{proposition}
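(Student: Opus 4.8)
The plan is to parametrize the linear-assignment scheme by its auxiliary variable $U = X_2 + \lambda X_1$, with $X_2 \sim \mathcal{CN}(0,\bar P_2)$ independent of $X_1 \sim \mathcal{CN}(0,\bar P_1)$ and $\lambda \in \bbC$ the fixed (statistics-only) assignment coefficient, and to show that the Gel'fand--Pinsker rate $R(\lambda) = I(U;Y_3\,|\,H_{13},H_{23}) - I(U;X_1)$ is maximized at $\lambda = 0$, where it collapses to \eqref{Eq_LA_GPC}. Since the transmitter generates $U$ and $X_1$ without knowledge of the realizations, $U$ and $X_1$ are independent of the channel, so $I(U;X_1\,|\,H) = I(U;X_1)$. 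The first reduction I would carry out is the chain-rule identity $R(\lambda) = I(U;Y_3\,|\,X_1,H) - I(U;X_1\,|\,Y_3,H)$. The point is that the first term is interference-free: conditioned on $X_1$ and $H$, one has $U - \lambda X_1 = X_2$ and $Y_3 - h_{13}X_1 = h_{23}X_2 + Z_3$, so $I(U;Y_3\,|\,X_1,H) = \E[C(|H_{23}|^2\bar P_2)]$, which does not depend on $\lambda$. Hence maximizing $R(\lambda)$ is equivalent to minimizing the residual term $I(U;X_1\,|\,Y_3,H)$, and the whole question becomes whether any fixed $\lambda$ can make $U$ less correlated with $X_1$ given $(Y_3,H)$ than the naive choice $\lambda=0$.

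Second, I would evaluate $I(U;X_1\,|\,Y_3,h)$ for each realization $h$ using the joint Gaussianity of $(U,X_1,Y_3)$, obtaining the closed form $\log[(A - D/B)(|h_{23}|^2\bar P_2+1)/\bar P_2]$, where $A = \bar P_2 + |\lambda|^2\bar P_1$, $B = |h_{23}|^2\bar P_2 + |h_{13}|^2\bar P_1 + 1$, and $D = |h_{23}^*\bar P_2 + \lambda h_{13}^*\bar P_1|^2$. Only $A - D/B$ depends on $\lambda$. The crucial use of the Rayleigh assumption enters here: $H_{13}$ and $H_{23}$ are independent and circularly symmetric, so the phase of $H_{13}H_{23}^*$, which is the only place the channel phases appear in $D$, is uniform and independent of the magnitudes. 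I would therefore first average over this phase for fixed magnitudes. Writing $D = D_0 + 2\sqrt{D_1}\cos\Theta$ with $\Theta$ uniform, the average $\E_\Theta[\log(A - D/B)]$ is given in closed form by the standard integral $\tfrac{1}{2\pi}\int_0^{2\pi}\log(a - b\cos\theta)\,d\theta = \log\tfrac{a+\sqrt{a^2-b^2}}{2}$, whose applicability follows from an arithmetic--geometric-mean check guaranteeing a nonnegative radicand. This also makes transparent that, after averaging, the rate depends on $\lambda$ only through $|\lambda|$.

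Third, I would show the phase-averaged quantity is minimized at $\lambda=0$ pointwise in the magnitudes. With $g = (AB - D_0)/B$ and $c = 2\sqrt{D_1}/B$, the claim $\tfrac{g+\sqrt{g^2-c^2}}{2} \ge g|_{\lambda=0}$ reduces, after isolating the square root and squaring, to $4\,g|_{\lambda=0}\,(g - g|_{\lambda=0}) \ge c^2$. Substituting the simplified expression $AB - D_0 = \bar P_2(|h_{13}|^2\bar P_1+1) + |\lambda|^2\bar P_1(|h_{23}|^2\bar P_2+1)$ collapses this, after cancelling the common positive factor $4|\lambda|^2\bar P_1\bar P_2/B^2$, to exactly the elementary inequality $(|h_{13}|^2\bar P_1+1)(|h_{23}|^2\bar P_2+1)\ge |h_{13}|^2|h_{23}|^2\bar P_1\bar P_2$, which is self-evidently true. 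Taking expectation over the magnitudes preserves the pointwise inequality, so $I(U;X_1\,|\,Y_3,H)$ is indeed minimized at $\lambda=0$; substituting back yields $\max_\lambda R(\lambda) = R(0) = \E[C(|H_{23}|^2\bar P_2/(1+|H_{13}|^2\bar P_1))]$, which is \eqref{Eq_LA_GPC}.

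I expect the main obstacle to be the cross term $2\,\mathrm{Re}(\bar\lambda H_{13}H_{23}^*\bar P_1\bar P_2)$ sitting inside the logarithm: it is what couples the two independent fades and what a fixed $\lambda$ would need to exploit coherently. The resolution, and the heart of why Rayleigh fading kills the coding gain, is that this term's phase is uniformly random and independent of everything else, so the closed-form phase integral averages away any would-be coherent benefit; what remains is governed solely by $|\lambda|$ and drives the optimum to $\lambda=0$. A secondary point to handle cleanly is verifying the nonnegativity of the radicand so that the integral formula applies, and confirming that allowing an additional coherent transmit component along $X_1$ cannot help either, again because the independent Rayleigh phases preclude coherent combining at Node 3.
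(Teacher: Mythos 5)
Your proof is correct and reaches the right conclusion, but by a genuinely different route from the paper's. The paper starts from the closed-form LA-GPC achievable rate quoted from \cite{scCR_TWCOMs09}, observes that maximizing over the assignment coefficient $\beta$ is equivalent to minimizing $f(\beta)=\E\big[\log\big(\bar{P}_1\bar{P}_2|H_{13}-\beta H_{23}|^2+\bar{P}_2+|\beta|^2\bar{P}_1\big)\big]$, and then disposes of the cross term in one stroke with a distributional identity: since $H_{13}$ and $H_{23}$ are independent zero-mean complex Gaussians, $|H_{13}-\beta H_{23}|^2$ has the same distribution as $(1+|\beta|^2\sigma^2_{23}/\sigma^2_{13})|H_{13}|^2$, so $f(\beta)\ge f(0)$ follows from monotonicity of the logarithm in two lines. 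You instead rebuild the Gel'fand--Pinsker rate from first principles via the chain-rule decomposition $R(\lambda)=I(U;Y_3|X_1,H)-I(U;X_1|Y_3,H)$, reduce the problem to minimizing $\E[\log(AB-D)]$ --- which, after expansion, is algebraically identical to the paper's $f(\beta)$ with $\beta=\lambda$ --- and then eliminate the coherent cross term by explicit averaging over the uniform phase of $H_{13}H_{23}^*$ using $\tfrac{1}{2\pi}\int_0^{2\pi}\log(a-b\cos\theta)\,d\theta=\log\tfrac{a+\sqrt{a^2-b^2}}{2}$, followed by an elementary algebraic inequality. Both arguments invoke the circular symmetry of Rayleigh fading at exactly the same spot; the paper's is shorter and integration-free, while yours is self-contained (no appeal to the rate formula of the earlier work), yields the exact phase-averaged rate as a function of $|\lambda|$ rather than only the comparison with $\lambda=0$, and makes the mechanism --- a fixed precoder cannot combine coherently across independent uniform phases --- completely explicit. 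In a full write-up you should spell out the nonnegativity of the radicand $g^2-c^2$ (your AM--GM check) and the degenerate branch $2\,g|_{\lambda=0}-g\le 0$ of the squaring step; both are routine and do not affect the validity of the argument.
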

\vspace{1mm}

It is easy to use Proposition \ref{Propo_bad_DPC} to calculate the
achievable rate of CT, which equals to the rate of treating
$H_{13}X_1$ at Node 3 in Phase 2 as noise. Then the CTR always
performs better than the CT in the fast Rayleigh fading channels
according to the following intuitions. In the CTR, Node 3 will
face a two user MAC in Phase 2, and the rate pair from treating
$H_{13}X_1$ as noise while decoding the SU's message is always in
the rate region of this MAC. Thus we only describe the CTR and its
achievable rate in detail as follows
\\
\noindent \textbf{Phase 1}: In the first $\lfloor t_1n \rfloor $
symbols, Node 2 and 3 listen to the PU's message $w_1$. Node 2
decodes $w_1$.

\noindent \textbf{Phase 2}: Within the next $\lfloor t_2n \rfloor$
symbols, Node 2 transmits
\begin{equation} \label{Eq_UTR_X2ergo}
    X_2=U_2+\sqrt{\alpha_1\frac{P_2}{P_1}}X_1,
\end{equation}
where $\alpha_1$ is the relaying ratio for the common message
$w_1$. Node 3 listens to and decodes $w_1$ and $w_2$.

\noindent \textbf{Phase 3}: For the rest of $\lfloor t_3n \rfloor$
symbol time, the clean relaying signals are transmitted from Node
2 and 3 respectively as \vspace{-3mm}
\begin{equation} \label{Eq_UTR_X2_Phase3ergo}
X_2= \sqrt{P^{(3)}_2/P_1}X_1 \mbox{ and } \; X_3= \sqrt{P_3/P_1}X_1.
\end{equation}
\noindent Note that one of the differences compared with the full
CSIT case in Section \ref{sec_UTR} is that now the CTR cannot
chose the phase in \eqref{Eq_UTR_X2ergo} and
\eqref{Eq_UTR_X2_Phase3ergo} since the channel phase realizations
are unknown at Node 2.

The achievable rate of the CTR in fading channels is presented in
the following Theorem. Compared with the conventional fast fading
channels, now the channel fading statistics will vary in different
phases (block of symbols) at Node 3 and 4. This new problem
corresponds to the SNR variation problem in Section \ref{sec_UTR},
and can be solved by Lemma \ref{Lemma_WeiYu} and
\ref{Lemma_WeiYuMAC} as well as the channel ergodicity. The
detailed proof is given in Appendix \ref{App_UTR_ergo}.

\begin{theorem} \label{Theorem_UTR_ergo}
With the statistics of CSIT and the transmitted power meeting
\eqref{Eq_UTR_power}, the following rate of the SU is achievable
by the CTR in the fast Rayleigh faded channel
\begin{equation} \label{eq_UTR_CR_ergo}
\!R_2\!\leq \max_{t_2,\alpha_1}
\min\!\Bigg\{\!t_2\E\!\!\left[C\!\left(|H_{23}|^2(1-\alpha_1)P_2\right)\right]\!,t_2\E\!\!\left[\!C\!\left(\left|H_{13}\!+\!\!\!\sqrt{\frac{\alpha_1
P_2}{P_1}}H_{23}\right|^2\!\!P_1+|H_{23}|^2(1-\alpha_1)P_2\right)\!\right]\!\!-R_T^m\!\Bigg\}\!,\!\!
\end{equation}
where $R^m_T= \min\left\{R'_T,\,t_2 \cdot \bE\left[C
\left(\big|H_{23}\sqrt{\alpha_1P_2/P_1}+H_{13}\big|^2P_1\right)\right]\right\}$
with $R'_T \triangleq R_T-t_1\E[C(|H_{13}|^2P_1)]$, and is subject
to the constraint for coexistence with \vspace{-3mm}
\begin{align} \label{eq_UTR_coex_ergo}
R_T \leq &t_1\E[C(|H_{14}|^2P_1)]+
 t_2\E\left[C\left(\frac{\left|H_{14}+\sqrt{\frac{\alpha_1
P_2}{P_1}}H_{24}\right|^2P_1}{1+|H_{24}|^2(1-\alpha_1)P_2}\right)\right]\notag\\
&+(1-t_1-t_2)\E\left[C\left(\left|H_{14}+\sqrt{P^{(3)}_2/P_1}H_{24}+\sqrt{P_3/P_1}H_{34}\right|^2P_1\right)\right],
\end{align}
where the $\alpha_1$, $t_1$ and $t_2$ are defined as those in
Theorem \ref{Theorem_UTR}, respectively. Moreover, let $u_{T \!\!
x}=\mathbf{1}_{t_1>R_T/\bE[C(|H_{12}|^2P_1)]}$ and
$u_{Rx}=\mathbf{1}_{t_2\bE[C(|H_{23}\sqrt{\alpha_1P_2/P_1}+H_{13}|^2P_1)]\geq
R_T'}$, $t_1$, $\alpha_1$, and $P^{(3)}_2$ are all zero if $u_{T
\!\! x}=0$, while $P_3=0$ if $u_{Rx}=0$.
\end{theorem}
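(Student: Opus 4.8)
The plan is to mirror the proof of Theorem \ref{Theorem_UTR} for the full-CSIT case, reusing the same three-phase signaling in \eqref{Eq_UTR_X2ergo}--\eqref{Eq_UTR_X2_Phase3ergo}, the same decomposition into the four cases indexed by $(u_{Tx},u_{Rx})$, and the same application of Lemmas \ref{Lemma_WeiYu} and \ref{Lemma_WeiYuMAC}. The one genuinely new ingredient is that the scalar coefficients populating the block-diagonal matrices $\bH^n_x,\bH^n_u$ (for decoding at Node 3) and the effective channel at Node 4 are now i.i.d. fading realizations rather than fixed gains, so the per-block log-determinants must be reduced to ergodic expectations. Note that Proposition \ref{Propo_bad_DPC} is what justifies abandoning the DPC/LA-GPC route here and setting up the Node-3 decoding as a MAC with common message instead.

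First I would treat the principal case $u_{Tx}=u_{Rx}=1$, setting up the MAC seen by Node 3 exactly as before: choose $X_1\sim\mathcal{CN}(0,P_1)$ and $U_2\sim\mathcal{CN}(0,(1-\alpha_1)P_2)$ independent, with the common-message signaling of \cite{liu2006capacity}. Because the channel phase is unknown at Node 2, I would drop the phase-alignment factor $\e^{j\theta_1}$ and keep only the real relaying coefficient $\sqrt{\alpha_1 P_2/P_1}$; the effective ``$X_1$-branch'' coefficient at Node 3 in Phase 2 then becomes $H_{13}+\sqrt{\alpha_1 P_2/P_1}\,H_{23}$. The matrices $\bH^n_x,\bH^n_u$ have, within the Phase-$i$ block of length $\lfloor t_i n\rfloor$, i.i.d. entries drawn from the corresponding fading variable, while $K_{x^n},K_{u^n},K_{z^n}$ remain block-wise scaled identities.

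The key step is the ergodic reduction. Since every covariance matrix is a block-wise scalar multiple of the identity and the channel matrices are diagonal, each ratio $\frac{1}{n}\log(|\cdots|/|K_{z^n}|)$ in Lemmas \ref{Lemma_WeiYu} and \ref{Lemma_WeiYuMAC} collapses to a normalized sum $\frac{1}{n}\sum_t C(\cdots)$ of per-symbol capacities. Splitting this sum over the three phase blocks and invoking the law of large numbers (channel ergodicity, using that the $H_{ij}(t)$ are i.i.d. within each phase), I would let each block average converge to its phase-specific expectation $\E[C(\cdots)]$, weighted by the corresponding $t_i$. Applying this to \eqref{eq_PomMAC_R1R2} yields the two decoding constraints whose combination gives \eqref{eq_UTR_CR_ergo} with $R_T^m=R_T-t_1\E[C(|H_{13}|^2P_1)]$; applying it to the Node-4 channel via Lemma \ref{Lemma_WeiYu}, with the effective Phase-2 and Phase-3 coefficients read off from \eqref{Eq_UTR_X2ergo}--\eqref{Eq_UTR_X2_Phase3ergo} and \eqref{Eq_channel}, produces the coexistence constraint \eqref{eq_UTR_coex_ergo}. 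The three remaining cases ($u_{Tx}=1,u_{Rx}=0$; $u_{Tx}=0,u_{Rx}=1$; $u_{Tx}=u_{Rx}=0$) are handled verbatim as in Theorem \ref{Theorem_UTR}, with the determinants again replaced by their ergodic limits and $R_T^m$ redefined through the indicator $u_{Rx}$.

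The main obstacle I anticipate is on the coexistence side: I must argue that $R_T$ stays \emph{single-user achievable} at Node 4 in the sense of Definition \ref{Def_SU}, under an effective channel whose fading statistics differ across the three phases. Lemma \ref{Lemma_WeiYu} secures single-user achievability for the per-block determinant, but I must justify that the ergodic limit is attained \emph{simultaneously} with reliable MAC decoding at Node 3, i.e. that one fixed choice of codebooks at a single growing block length $n$ drives all the normalized log-determinants to their expectations at once. This holds because the three phase lengths $\lfloor t_i n\rfloor$ all grow linearly in $n$, so the within-phase averages converge jointly; stating this convergence cleanly, and confirming that the resulting ergodic rate is still attainable by the single-user (minimum-distance) decoder, is the part requiring the most care.
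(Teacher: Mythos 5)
Your proposal follows essentially the same route as the paper's own proof: reuse the signaling and case decomposition of Theorem \ref{Theorem_UTR}, apply Lemmas \ref{Lemma_WeiYu} and \ref{Lemma_WeiYuMAC} with the diagonal channel matrices now populated by the i.i.d.\ realizations $h_{ij}(t)$, collapse each normalized log-determinant into a per-phase sum, and invoke ergodicity so that each block average converges to $t_i\,\E[C(\cdot)]$, yielding \eqref{eq_UTR_coex_ergo} and \eqref{eq_UTR_CR_ergo}. The only difference is that you flag the simultaneous-convergence issue explicitly, which the paper leaves implicit; the substance of the argument is identical.
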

\vspace{+1mm}

Unlike the full CSIT case, we can characterize the optimal common
message relaying ratio $\alpha_1$ as in the following Corollary.
The key observation is that the pointwise minimum of the two rate
functions in \eqref{eq_UTR_CR_ergo} can be shown to be
monotonically decreasing with $\alpha_1$. Note that we can not get
similar results for the full CSIT case, the discussions are given
right after the proof of this Corollary.

\begin{coro} \label{Coro_ergodic}
Given $t_1$ and $t_2$, the optimal common message relaying ratio
$\alpha_1$ in Theorem \ref{Theorem_UTR_ergo} will validate the
equality in the constraint for coexistence
\eqref{eq_UTR_coex_ergo}.
\end{coro}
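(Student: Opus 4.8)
The plan is to reduce the corollary to two monotonicity facts and then a boundary argument. Fix $t_1$ and $t_2$ (so that $P_1=\bar P_1$, $P_2$, $P^{(3)}_2$ and $P_3$ are fixed through \eqref{Eq_UTR_power}), and regard both sides of the problem as functions of $\alpha_1\in[0,1]$. I would show (a) the objective $\min\{f_1(\alpha_1),f_2(\alpha_1)\}$ in \eqref{eq_UTR_CR_ergo} is non-increasing in $\alpha_1$, and (b) the right-hand side of the coexistence constraint \eqref{eq_UTR_coex_ergo} is non-decreasing in $\alpha_1$. Granting these, the feasible set $\{\alpha_1:\text{RHS of }\eqref{eq_UTR_coex_ergo}\ge R_T\}$ is an interval $[\alpha_1^\star,1]$, and maximizing a non-increasing objective over it forces the optimum to the left endpoint $\alpha_1^\star$, where by continuity the constraint holds with equality. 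The only exception is when $\alpha_1=0$ is already feasible with strict slack, i.e.\ the PU target is met without any Phase-2 common-message relaying; I would note that this degenerate case falls outside the regime the corollary addresses.

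For (a), $f_1(\alpha_1)=t_2\E[C(|H_{23}|^2(1-\alpha_1)P_2)]$ is immediately decreasing. For $f_2$, expanding $R^m_T=\min\{R'_T,\,t_2\E[C(|H_{23}\sqrt{\alpha_1P_2/P_1}+H_{13}|^2P_1)]\}$ gives $f_2=\max\{\,t_2S(\alpha_1)-R'_T,\ t_2T(\alpha_1)\,\}$, where $S(\alpha_1)$ is the MAC sum-rate term and $t_2T(\alpha_1)=t_2S(\alpha_1)-t_2\E[C(|H_{13}+\sqrt{\alpha_1P_2/P_1}H_{23}|^2P_1)]$ collapses, via $C(X+Y)-C(X)=C(Y/(1+X))$, to the treating-interference-as-noise rate $t_2\E[C((1-\alpha_1)P_2|H_{23}|^2/(1+|H_{13}+\sqrt{\alpha_1P_2/P_1}H_{23}|^2P_1))]$. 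Since the pointwise maximum of two decreasing functions is decreasing, and the minimum of decreasing functions is decreasing, it suffices to show $S$ and $T$ are each decreasing. Here I would use that $H_{13},H_{23}$ are independent zero-mean complex Gaussian, so conditioning on their magnitudes the relative phase $\phi$ is uniform, and invoke the closed form $\frac{1}{2\pi}\int_0^{2\pi}\log(Q+c\cos\phi)\,d\phi=\log\frac{Q+\sqrt{Q^2-c^2}}{2}$. The crucial structural observation is a power-conservation identity: the total received power at Node 3, $Q=1+|H_{13}|^2P_1+|H_{23}|^2P_2$, does not depend on $\alpha_1$, because raising $\alpha_1$ merely shifts Phase-2 power from $U_2$ into the relayed $X_1$. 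Thus increasing $\alpha_1$ only enlarges the cross-correlation factor $c=2|H_{13}||H_{23}|\sqrt{\alpha_1P_1P_2}$, and the right side of the identity strictly decreases in $c$; this yields the decrease of $S$ directly, and of $T$ after the analogous reduction.

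For (b), the same device applies at Node 4. Only the Phase-2 term of \eqref{eq_UTR_coex_ergo} depends on $\alpha_1$; conditioning on $|H_{14}|,|H_{24}|$ and phase-averaging, its numerator-plus-denominator again conserves total power, $Q_4=1+|H_{14}|^2P_1+|H_{24}|^2P_2$, independent of $\alpha_1$, so the term becomes $\E[\log\frac{Q_4+\sqrt{Q_4^2-b^2}}{2}-\log(1+|H_{24}|^2(1-\alpha_1)P_2)]$ with $b=2|H_{14}||H_{24}|\sqrt{\alpha_1P_1P_2}$. Differentiating, the positivity of the derivative reduces, after clearing denominators, to $Q_4^2\ge 4|H_{14}|^2P_1(1+|H_{24}|^2P_2)$, which is precisely the perfect square $\big((1+|H_{24}|^2P_2)-|H_{14}|^2P_1\big)^2\ge0$.

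The hard part is the two claimed monotonicities in full strength, not their intuition: in both $T(\alpha_1)$ and the coexistence term the cross-correlation factor pulls the phase-averaged log down while the shrinking ``noise'' denominator pushes it up, so a naive term-by-term derivative is sign-indefinite. The power-conservation identities are what make the $\alpha_1$-dependence collapse onto the single growing parameter $c$ (resp.\ $b$), after which the residual competition closes via the AM-GM/perfect-square inequality above. A secondary point requiring care is the boundary/continuity argument turning ``smallest feasible $\alpha_1$'' into an exact equality, for which the established strict monotonicity and continuity of the RHS in $\alpha_1$ suffice.
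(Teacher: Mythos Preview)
Your high-level strategy is exactly the paper's: fix $t_1,t_2$, show the objective $\min\{\cdot,\cdot\}$ in \eqref{eq_UTR_CR_ergo} is non-increasing in $\alpha_1$ while the right side of \eqref{eq_UTR_coex_ergo} is non-decreasing, and conclude that the smallest feasible $\alpha_1$ (hence equality in the constraint) is optimal. The difference lies in how the monotonicity facts are proved. The paper does not compute the phase integral in closed form; instead it conditions on $|H_{13}|,|H_{23}|$ and observes that $\E_{\theta}[\log(A+B\sqrt{\alpha_1}\cos\theta)]$ is decreasing in $\alpha_1$ because $\cos\theta$ is zero-mean and $\log$ is concave (a mean-preserving-spread/Jensen argument, citing Boyd). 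For what you call $T(\alpha_1)$, the paper avoids your ``analogous reduction'' altogether: it writes the second branch of the objective as the decreasing sum-rate term minus $\E[C(|H_{13}+\sqrt{\alpha_1 P_2/P_1}H_{23}|^2P_1)]$, and simply notes the latter is \emph{increasing} in $\alpha_1$ because $H_{13}+\sqrt{\alpha_1 P_2/P_1}H_{23}$ is Gaussian with variance growing in $\alpha_1$. This is cleaner than your route for $T$, where the ``noise-plus-interference'' constant $Q'=1+|H_{13}|^2P_1+\alpha_1|H_{23}|^2P_2$ itself depends on $\alpha_1$, so the reduction is not truly analogous to the $S$ case; your claim is correct, but it needs the extra algebra you glossed over. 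Conversely, for part (b) the paper only asserts ``it is easy to see'' that the coexistence right-hand side increases in $\alpha_1$, whereas you actually supply an argument; your closed-form-plus-perfect-square computation is one valid way, and a shorter alternative is to condition on $H_{24}$ and use stochastic monotonicity of the non-central chi-square in its non-centrality parameter. In summary: same architecture, different monotonicity tools; the paper's concavity argument is lighter for (a), and your explicit treatment of (b) fills a gap the paper leaves to the reader.
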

\begin{proof}
To get the desire result, first we prove that both arguments of
the pointwise minimum $\min\{,\}$ in \eqref{eq_UTR_CR_ergo} are
monotonically decreasing with $\alpha_1$ given $t_2$. We focus on
the second argument first and rearrange it as \vspace{-4mm}
\begin{equation} \label{eq_UTR_ero_alpha}
t_2
\E\bigg[\log\bigg(1+|H_{13}|^2P_1+|H_{23}|^2P_2+2\mathrm{Re}\{H_{13}H^*_{23}\}\sqrt{P_1P_2}\sqrt{\alpha_1}\bigg)\bigg]-R^c_T
=t_2 \max\{f_1(\alpha),f_2(\alpha)\},
\end{equation}
where the equality comes from the definition of $R_T^m$ in Theorem
\ref{Theorem_UTR_ergo}, with $f_1(\alpha)$ and $f_2(\alpha)$
defined as
\begin{align}
f_1(\alpha) \triangleq
&\E\bigg[\log\bigg(1+|H_{13}|^2P_1+|H_{23}|^2P_2+2\mathrm{Re}\{H_{13}H^*_{23}\}\sqrt{P_1P_2}\sqrt{\alpha_1}\bigg)\bigg]-
\frac{R'_T}{t_2},\label{eq_ergo_alpha_f1}\\
f_2(\alpha)\triangleq
&\E\bigg[\log\bigg(1+|H_{13}|^2P_1+|H_{23}|^2P_2+2\mathrm{Re}\{H_{13}H^*_{23}\}\sqrt{P_1P_2}\sqrt{\alpha_1}\bigg)\bigg]
\notag \\ & - \bE\left[C
\left(\big|H_{23}\sqrt{\alpha_1P_2/P_1}+H_{13}\big|^2P_1\right)\right],
\label{eq_ergo_alpha_f2}
\end{align}
respectively. In the following, we will respectively show that
$f_1(\alpha)$ and $f_2(\alpha)$ are both monotonically decreasing
of $\alpha_1$. Since the pointwise maximum of the two
monotonically decreasing functions is still a monotonically
decreasing function, from \eqref{eq_UTR_ero_alpha}, the second
argument of the $\min\{,\}$ in \eqref{eq_UTR_CR_ergo} is a
monotonically decreasing function of $\alpha_1$.

Now we show the monotonically decreasing properties of
$f_1(\alpha)$ and $f_2(\alpha)$. As for the $f_1(\alpha)$ in
\eqref{eq_ergo_alpha_f1}, note that from the definition of $R'_T$
in Theorem \ref{Theorem_UTR_ergo}, only the first term in the RHS
of \eqref{eq_ergo_alpha_f1} is related to $\alpha_1$. This term
can be further represented by
\begin{equation} \label{eq_UTR_ero_alpha1}
\E_{|H_{13}|,|H_{23}|}\bigg[\E_{\theta_{13},\theta_{23}}\bigg[\log\bigg(1+|H_{13}|^2P_1+|H_{23}|^2P_2+2\sqrt{P_1P_2}\sqrt{\alpha_1}\mathrm{Re}\{H_{13}H^*_{23}\}\bigg)\bigg||H_{13}|,|H_{23}|\bigg]\bigg],
\end{equation}
where the property of the conditional mean is applied. We will
show that given realizations $|H_{13}|=|h_{13}|$ and
$|H_{23}|=|h_{23}|$, the conditional mean
$\E_{\theta_{13},\theta_{23}}\big[(.)\big|\scriptstyle{|H_{13}|=|h_{13}|,|H_{23}|=|h_{23}|}\big]$
in \eqref{eq_UTR_ero_alpha1} is a monotonically decreasing
function of $\alpha_1$. Then so are \eqref{eq_UTR_ero_alpha1} and
$f_1(\alpha)$. This conditional mean equals to
\begin{equation} \label{eq_UTR_ero_alpha2}
\E_{\theta_{13},\theta_{23}}\left[\log\bigg(1+|h_{13}|^2P_1+|h_{23}|^2P_2+2\sqrt{P_1P_2}\sqrt{\alpha_1}|h_{13}||h_{23}|\cos(\theta_{13}-\theta_{23})\bigg)\right].
\end{equation}
Since $|H_{13}|,|H_{23}|$ and $\theta_{13},\theta_{23}$ are
independent, given $|H_{13}|=|h_{13}| \mbox{ and
}|H_{23}|=|h_{23}|$, both $\theta_{13}$ and $\theta_{23}$ are
still independent and uniformly distributed in $(0,2\pi]$,
respectively. Then $\cos(\theta_{13}-\theta_{23})$ is zero mean.
Together with the fact that the log function is concave, we know
that \eqref{eq_UTR_ero_alpha2} is monotonically decreasing with
respect to $\alpha_1$ from \cite[P.115]{Book_Boyd}. As for
$f_2(\alpha)$, note that the term
$\bE[C(|h_{23}\sqrt{\alpha_1P_2/P_1}+h_{13}|^2P_1)]$ in
\eqref{eq_ergo_alpha_f2} is monotonically increasing in
$\alpha_1$. Since the first terms of the RHS of
\eqref{eq_ergo_alpha_f2} and \eqref{eq_ergo_alpha_f1} are the
same, from the previous results, we establish the monotonically
decreasing property of $f_2(\alpha)$.

As for
$\!t_2\E\!\!\left[C\!\left(|H_{23}|^2(1-\alpha_1)P_2\right)\right]$,
the first argument of the $\min\{,\}$ in \eqref{eq_UTR_CR_ergo},
it is clear that this term is monotonically decreasing with
$\alpha_1$ given $t_2$. Then from the fact that the minimum of two
monotonically decreasing functions results in a monotonically
decreasing function, we prove the monotonically decreasing
property of the pointwise minimum in \eqref{eq_UTR_CR_ergo}.
Finally, it is easy to see that the RHS of
\eqref{eq_UTR_coex_ergo} monotonically increases with $\alpha_1$
given $t_1$ and $t_2$. Then the optimal $\alpha_1$ must validates
the equality in \eqref{eq_UTR_coex_ergo}.
\end{proof}

Note that the optimization problem with full CSIT in Theorem
\ref{Theorem_UTR} is much more complicated than that in Theorem
\ref{Theorem_UTR_ergo}, and the simple result in Corollary
\ref{Coro_ergodic} can not be obtained. Depending on the
combinations of $\theta_{13}$,$\theta_{23}$ and $\theta_1$, the
second argument of the $\min\{,\}$ in \eqref{eq_UTR_CR} may
increase with $\alpha_1$. That is, more common message relaying
from Node 2 can increase the sum rate of the MAC at Node 3. The
monotonically decreasing property does not always exist in the RHS
of \eqref{eq_UTR_CR}, and the SU's rate may increases in a certain
range of $\alpha_1$. However, the unknown channel phase at Node 2
prohibits the SU to adjust $\theta_1$, and the common message
relaying is blind and always harmful at Node 3. One should just
use the minimum power which meets the constraint for coexistence
for the common message relaying.

Now we show the multiplexing gain. The proof is similar to that of
Corollary \ref{Theorem_mul_UTR} and is omitted.

\begin{coro} \label{Theorem_mul_UTR_ergo}
With the statistics of CSIT, the CTR can achieve the following
multiplexing gain under the power constraints
\eqref{Eq_UTR_power}, \vspace{-3mm}
\begin{equation}\label{eq_mul_UTR_fading}
\max \left\{m,\left(1-\frac{\bE[C(|H_{14}|^2\stackrel
{\;}{\bar{P}_1})]}{\bE[C(|H_{12}|^2\stackrel
{\;}{\bar{P}_1})]}\right)^+ \right\},
\end{equation}
where \vspace{-6mm}
\[
m=\Bigg\{\begin{array}{ll}1-t_3, \mbox{ for any $t_3$}\in \left(0,\,1-\frac{\bE[C(|H_{14}|^2\stackrel {\;}{\bar{P}_1})]}{\bE[C(|H_{13}|^2\stackrel {\;}{\bar{P}_1})]}\right], & \mbox{ when }\bE[C( |H_{14}|^2\bar{P}_1)]<\bE[C(|H_{13}|^2\bar{P}_1)], \vspace{-4mm}\\
0, &\mbox{    otherwise.}\end{array}
\]
\end{coro}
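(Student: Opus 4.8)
The plan is to mirror the high-SNR argument used for the full-CSIT Corollary~\ref{Theorem_mul_UTR} in Appendix~\ref{App_mul_UTR}, with every deterministic term $C(|h_{ij}|^2\bar{P}_1)$ replaced by its ergodic counterpart; write $\bar{C}_{ij}\triangleq\E[C(|H_{ij}|^2\bar{P}_1)]$ for brevity. Throughout I keep $\bar{P}_1$ fixed and let $\bar{P}_c=\bar{P}_2+\bar{P}_3\to\infty$, and I use that under the coexistence constraint the PU attains its single-user ergodic capacity, so the target $R_T=\bar{C}_{14}$ is bounded in this limit. The two asymptotic facts I need are: (i) any term $\E[C(|H_{ij}|^2P)]=\log P+\E[\log|H_{ij}|^2]+o(1)$ grows like $\log P$ with a finite offset (finite for zero-mean complex Gaussian $H_{ij}$); and (ii) every term depending only on $P_1=\bar{P}_1$ stays bounded. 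Hence, after dividing by $\log\bar{P}_c$ in \eqref{Eq_mul_def}, only the terms carrying $P_2$ or $P_3$ survive.

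First I treat pure receiver relaying, setting $t_1=\alpha_1=P^{(3)}_2=0$ and $P_3=\bar{P}_3/t_3$ with $\bar{P}_3$ a fixed fraction of $\bar{P}_c$. In \eqref{eq_UTR_CR_ergo} the first argument is $t_2\E[C(|H_{23}|^2P_2)]\sim t_2\log\bar{P}_c$, and the second is $t_2\E[C(|H_{13}|^2P_1+|H_{23}|^2P_2)]-R^m_T\sim t_2\log\bar{P}_c$, because $R^m_T=\min\{R'_T,\,t_2\bar{C}_{13}\}$ with $R'_T=R_T$ is bounded; so the pointwise minimum scales like $t_2\log\bar{P}_c$, yielding multiplexing gain $t_2=1-t_3$. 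I then verify the two side conditions: the coexistence constraint \eqref{eq_UTR_coex_ergo} holds for all large $\bar{P}_c$ since its Phase-3 term $(1-t_1-t_2)\E[C(|H_{14}+\sqrt{P_3/P_1}H_{34}|^2P_1)]\sim t_3\log\bar{P}_c\to\infty$ while $R_T$ is fixed; and $u_{Rx}=1$ demands $t_2\bar{C}_{13}\ge R'_T=R_T=\bar{C}_{14}$, i.e.\ $t_3\le 1-\bar{C}_{14}/\bar{C}_{13}$, a nonempty range exactly when $\bar{C}_{14}<\bar{C}_{13}$. This yields the value $m$; otherwise no admissible $t_3>0$ exists, Node~3 cannot decode $w_1$, $P_3=0$, and $m=0$.

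For the second mode, pure transmitter relaying, I put $t_3=0$, which collapses the CTR to the CT of Section~\ref{sec_UT}. The condition $u_{T \!\! x}=1$ forces $t_1>R_T/\bar{C}_{12}$, while the SU rate $t_2\E[C(|H_{23}|^2(1-\alpha_1)P_2)]\sim t_2\log\bar{P}_c$ gives pre-log $t_2=1-t_1$, optimized by letting $t_1\downarrow \bar{C}_{14}/\bar{C}_{12}$; this produces the term $(1-\bar{C}_{14}/\bar{C}_{12})^+$. Because the CTR may select either mode, its achievable multiplexing gain is the maximum of the two, which is \eqref{eq_mul_UTR_fading}.

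The main obstacle I anticipate is the careful bookkeeping of the ergodic high-SNR limits rather than any conceptual difficulty: I must confirm that the finite offsets $\E[\log|H_{ij}|^2]$, the bounded quantities $R^m_T,R'_T$, and all Phase-1/Phase-2 $P_1$-only terms all vanish after division by $\log\bar{P}_c$, and that the coexistence constraint is asymptotically slack whenever any clean relaying link ($P^{(3)}_2$ or $P_3$) is active. Once these limits are in hand, the argument is structurally identical to Appendix~\ref{App_mul_UTR}.
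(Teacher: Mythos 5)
Your proposal is correct and follows essentially the same route the paper intends: the authors omit this proof precisely because it is the Appendix~\ref{App_mul_UTR} argument with each $C(|h_{ij}|^2\bar{P}_1)$ replaced by $\E[C(|H_{ij}|^2\bar{P}_1)]$, using pure receiver relaying ($t_1=\alpha_1=P^{(3)}_2=0$) and pure transmitter relaying ($t_3=0$) as the two suboptimal modes and taking the better one. The only cosmetic slip is the claim that $t_3=0$ ``collapses the CTR to the CT'' --- the CTR still uses the MAC-with-common-message decoding rather than LA-GPC/DPC --- but this does not affect the pre-log computation.
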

\vspace{-5mm}
\section{Simulation Results}\label{Sec_Simu}
\vspace{-2mm} Here we provide simulation results to show the
performances of our clean-relaying aided cognitive radios. In the
following discussions and the simulation figures, we will
abbreviate the results from \cite{Viswanath_CR}, or CT with
$t_3=0$, as JV. The noise variances at the receivers are set to
unity, and the average transmitted SNR of PU ($\bar{P}_1$ in
\eqref{Eq_avg_power}) is set to 20 dB. We assume that the SU in
both CT (including JV) and CTR have the same average transmission
SNR $\bar{P}_c$, which can be computed according to
\eqref{Eq_UT_power} ($\bar{P}_c=\bar{P}_2$) and
\eqref{Eq_sum_power}, respectively. We set the PU's rate $R_T$ as
that when the interference from the SU is absent, that is, as
$C(|h_{14}|^2 P_1)$ and $\E[C(|H_{14}|^2 P_1)]$ in the full and
statistics of CSIT cases, respectively.

We first show the rate comparisons for channels with full CSIT.
The channel gain of each figure is listed in Table
\ref{Table_Channel_Full} where the unit of the phase is radian.
The $t_1$ in both CT and JV are $R_T/C(|h_{12}|^2P_1).$ In
Fig.~\ref{Fig_rate_SNR}, we can see that with large enough
$|h_{34}|$ as specified in Table \ref{Table_Channel_Full}, the
clean relaying from Node 3 makes the CTR have the best rate
performance. Next we consider the case where $|h_{34}|$ is weaker
in Fig.~\ref{Fig_UT_good}. When $|h_{34}|$ is smaller than
$|h_{24}|$, the CTR may prefer clean relaying from Node 2 rather
than from Node 3, that is, $P^{(3)}_2>P_3=0$. It is easy to check
that in this case, the optimal $\alpha_1$ for the CTR is also
feasible for the CT. Then comparing \eqref{eq_UTR_CR} and
\eqref{eq_UT_CR}, we know that the CT performs better than the CTR
as in Fig.~\ref{Fig_UT_good}. Moreover, in Fig.~\ref{Fig_rate_SNR}
and \ref{Fig_UT_good}, the clean relaying of the CTR and CT yields
significant gains over the JV, respectively. Next, we show how the
SU's rate changes with $|h_{24}|$ in Fig.~\ref{Fig_rate_h24}. We
can find out that there are three regions. In Region 1, where
$|h_{24}|<|h_{14}|$, we find that the CT and JV coincide. This is
consistent with \cite{Viswanath_CR}, where JV is proved to be
optimal in this region when relaying from Node 3 is prohibited. In
Region 2 and 3, $|h_{24}|>|h_{14}|$, the JV wastes lots of power
on the relaying since the SU produces large interference at Node
4. The CT performs better than the JV due to the clean relaying.
In Region 2, $|h_{24}|<|h_{34}|$, the CTR performs better than the
CT since the CTR can use a better relaying path than that of CT in
Phase 3. In Region 3, $|h_{24}|>|h_{34}|$, the CT performs the
best according to previously discussions for
Fig.~\ref{Fig_UT_good}. However, the CT and CTR have the same
performance due to the following reasons. In the channel setting
for Fig.~\ref{Fig_rate_h24} listed in Table
\ref{Table_Channel_Full}, we find that the first term of the
$\min\{.\}$ in \eqref{eq_UTR_CR} is selected, which is the same as
\eqref{eq_UT_CR}. Moreover, since this term is independent of
$\theta_1$, the relaying phase $\theta_1$ for the CTR is chosen as
$\theta_{14}-\theta_{24}$ from \eqref{eq_UTR_coex}. Together with
the power allocation as in the discussions for
Fig.~\ref{Fig_UT_good}, the constraints for coexistence
\eqref{eq_UTR_coex} and \eqref{eq_UT_coex} are the same in this
simulation. The optimal $\alpha_1$ of CTR and CT are also the
same, and the CTR and CT have the same rate performance.

Next we consider the rate performance in the fast Rayleigh faded
channels with the statistics of CSIT. The channel variance of each
link is listed in Table \ref{Table_Channel_fading}. As shown in
Fig. \ref{Fig_ergodic_h24_leq_h34}, the CTR outperforms the CT and
JV, which is consistent with the discussions under Proposition
\ref{Propo_bad_DPC} in Section \ref{sec_ergodic}. The $t_1$ in the
JV is set to $R_T/\E[C(|h_{12}|^2P_1)]$. When the SU's transmitted
SNR is low, the CT (also JV) can only support very low rate as
shown in Fig.~\ref{Fig_ergodic_h24_leq_h34}. This is because that
the PU's transmitted SNR is set to 20 dB, then the interference at
Node 3 is relatively large for the SU when the SU's transmitted
SNR is small. According to Proposition \ref{Propo_bad_DPC}, the SU
of CT (also JV) can only treat interference from the PU as noise,
which degrades the rate performance a lot. However, the MAC
decoder of CTR at Node 3 can avoid this problem. In Fig.
\ref{Fig_coro2} we show an example to verify the results in
Corollary \ref{Coro_ergodic}. We can find that the optimal
$\alpha_1$ which maximizes the SU's rate also make the equality in
the constraint for coexistence \eqref{eq_UTR_coex_ergo} valid.
That is, the optimal $\alpha_1$ is the minimum $\alpha_1$ which
makes the PU's rate with the interference from SU the same as the
interference-free rate.

Finally, we show the multiplexing gain comparisons in the
following. Following the spirit of \cite{PaulrajJSAC07}, we use
the generalized multiplexing gain (GMG) of the SU, which is
defined as $R_2/\log\bar{P}_c$, as the performance metric for
finite SNR. As $\bar{P}_c$ approaches infinity, the GMG will
approach the multiplexing gain defined in \eqref{Eq_mul_def}. We
first show the full CSIT cases in Fig.~\ref{Fig_nonfading_MG} and
\ref{Fig_MG2} with channels specified in Table
\ref{Table_Channel_Full} respectively. In our simulation, we set a
lower bound for $t_3$ as 0.01 when $t_3 \neq 0$, and $m$ in
Corollary \ref{Theorem_mul_UTR} will be upper-bounded by
1-0.01=0.99. We then use the multiplexing gain in
\eqref{eq_mul_UTR} with $m=0.99$ as the GMG upper bound in
Fig.~\ref{Fig_nonfading_MG} and \ref{Fig_MG2}. With large
$|h_{34}|$ as in Table \ref{Table_Channel_Full}, the GMG
advantages of the CTR over the JV can be seen from
Fig.~\ref{Fig_nonfading_MG}. When the transmitted SNR is larger
than 40 dB, we can find that the curve of CTR diverges from those
of the CT and JV. This is because the CTR selects pure receiver
relaying in this SNR region. Since $|h_{14}|<|h_{13}|$ in this
simulation, according to discussions under Corollary
\ref{Theorem_mul_UTR}, the CTR with pure receiver relaying has
larger GMG than those of the CT and JV when $t_3$ is small and the
SNR is large enough. Also when the SNR increases, the GMG of the
CTR will approach the upper bound \eqref{eq_mul_UTR}. Note that we
plot the figures according to the transmitted SNR not the common
received SNR in most of the literatures. The transmitted SNR is
much larger than the received SNR since the $|h_{23}|$ of
Fig.~\ref{Fig_nonfading_MG} in Table \ref{Table_Channel_Full} is
small. It then takes larger transmit SNR than the common received
SNR for the GMG to approach the upper bound (multiplexing gain).
In Fig. \ref{Fig_MG2}, we show the case with small $|h_{34}|$. The
CT performs the best while the CTR performances the worst.
However, as predicted by Corollary \ref{Theorem_mul_UTR}, even
though the CTR has the worst GMG, it will approach the GMGs of CT
and JV as the SNR increases. The GMG results for the fading
channels with the statistics of CSIT are shown in
Fig.~\ref{Fig_fading_MG1}. The GMG upper bound is computed from
Corollary \ref{Theorem_mul_UTR_ergo} with $m=0.99$ as in
Fig.~\ref{Fig_nonfading_MG}. According to the discussions for Fig.
\ref{Fig_ergodic_h24_leq_h34}, the CT and JV always have worse GMG
than that of the CTR according to Proposition \ref{Propo_bad_DPC}.

\vspace{-4mm}
\section{Conclusion}\label{Sec_Conclusion}
\vspace{-2mm} In this paper, we considered the
interference-mitigation based cognitive radio where the SU must
meet the coexistence constraint to maintain the rate performance
of the PU. We proposed two new transmission schemes aided by the
clean relaying named as the clean transmitter relaying and the
clean transmitter-receiver relaying aided cognitive radio,
respectively. Compared with the previous DPC-based cognitive radio
without clean relaying, the proposed schemes provide significant
rate gains in a variety of channels with different levels of CSIT.
Moreover, the implementation complexity of the CTR is much lower
than that of the DPC-based cognitive radio. \vspace{-5mm}
\appendix 
\vspace{-4mm}
\subsection{Proof of Theorem \ref{Theorem_UT}} \label{App_CT}
\vspace{-2mm} Let $X_1$ be zero mean Gaussian with variance $P_1$,
the PU then generates its random codebook according to the
distribution of $X_1$ with rate $R_T$. From
\cite{Azarian_advance_decoding} we know that the fractional
decoding interval must satisfy $t_1>R_T/I(X_1;Y_2)$ to ensure the
successful decoding of $w_1$ using the received symbols from Node
2 in Phase 1. It then results in the constraint \eqref{eq_UT_DF}
from \eqref{Eq_channel}.

We now invoke Lemma \ref{Lemma_WeiYu} to derive the coexistence
constraint. From \eqref{Eq_Vish_X2} in Phase 2,
\eqref{Eq_UT_X2_Phase2} in Phase 3 and the channel model
\eqref{Eq_channel}, we know that within the $n$-symbol time,
$K_{X_1^n}=P_1\mathbf{I}_{n}$, the equivalent channel at Node 4
\vspace{-4mm}
\begin{align}
\bH^n=\mathrm{diag}\Bigg(& h_{14}\mathbf{I}_{\lfloor t_1n
\rfloor}, \;
\left(|h_{14}|+|h_{24}|\sqrt{\alpha_1\frac{P_2}{P_1}}\right)e^{j\theta_{14}}\mathbf{I}_{\lfloor
t_2n \rfloor}, \notag
\left(|h_{14}|+|h_{24}|\sqrt{P_2/P_1}\right)\e^{j\theta_{14}}\mathbf{I}_{\lfloor
t_3n \rfloor} \Bigg) \label{UT_coex_H}
\end{align}
and the equivalent noise has covariance matrix $
K_{Z^n}=\mathrm{diag}\left( \mathbf{I}_{\lfloor t_1n \rfloor},
(1+|h_{24}|^2(1-\alpha_1)P_2)\mathbf{I}_{\lfloor t_2n
\rfloor},\bI_{\lfloor t_3n \rfloor} \right), $ since the DPC
encoded $X^D_2$ is Gaussian with variance $(1-\alpha_1)P_2$ and
independent of $X_1$ \cite{CostaDPC}. Then by invoking Lemma
\ref{Lemma_WeiYu}, we have \eqref{eq_UT_coex} to ensure that $R_T$
is single-user achievable. Finally, since Node 2 uses $h_{13}X_1$
as the noncausal side-information at the transmitter in Phase 2,
by applying the well-known DPC result \cite{CostaDPC} we have
\eqref{eq_UT_CR}.

\vspace{-7mm}
\subsection{Proof of Corollary \ref{Theorem_mul_UTR}} \label{App_mul_UTR}
\vspace{-2mm} We will consider two cases, that is, pure receiver
and pure transmitter relaying. These two schemes can achieve
multiplexing gains $m$ and $(1-C(|h_{14}|^2\stackrel
{\;}{\bar{P}_1})/C(|h_{12}|^2\stackrel {\;}{\bar{P}_1}))^+$,
respectively. When the channels conditions $|h_{13}|>|h_{14}|$ and
$|h_{12}|>|h_{14}|$ are both valid, both schemes are feasible and
the CTR can achievable the best multiplexing gain of these two
schemes as \eqref{eq_mul_UTR}. If only one of the channel
conditions is valid, the multiplexing gain of the corresponding
feasible scheme will be chosen by \eqref{eq_mul_UTR}.

We first show that if $|h_{13}|>|h_{14}|$, as \eqref{Eq_UTR_MG},
the multiplexing gain $1-t_3$ is achievable by the pure receiver
relaying. In this scheme, $t_1=\alpha_1=0$, $P_2^{(3)}=0$, and
$t_3=1-t_2$, then we may set $P_3=P_2=\bar{P}_c$ from
\eqref{Eq_sum_power}. Without loss of generality, we can set $R_T
= C(|h_{14}|^2\stackrel {\;}{\bar{P}_1})$ in the following
analysis since $R_T \leq C(|h_{14}|^2\stackrel {\;}{\bar{P}_1})$
from the channel capacity theorem \cite{Book_Cover}. With the
above parameter selections, the constraint for coexistence
\eqref{eq_UTR_coex}, and the constraint
$t_2C(|h_{23}\sqrt{\alpha_1P_2/P_1}\e^{j
\theta_1}+h_{13}|^2P_1)\geq R_T'$ to validate $u_{Rx}=1$
respectively reduce to
\begin{equation} \label{eq_UTR_coex_mul}
C(|h_{14}|^2\stackrel {\;}{\bar{P}_1})<t_2 \cdot
C\!\left(\!\frac{|h_{14}|^2\bar{P}_1}{1+|h_{24}|^2\stackrel
{\;}{\bar{P}_c}}\!\right)+(1-t_2)C\left(\!\left(|h_{14}|\sqrt{\stackrel
{\;}{\bar{P}_1}}+|h_{34}|\sqrt{\stackrel
{\;}{\bar{P}_c}}\right)^{\!2}\!\right)\!, \mbox{and}\;\; t_2 \geq
\frac{C(|h_{14}|^2\stackrel
{\;}{\bar{P}_1})}{C(|h_{13}|^2\stackrel {\;}{\bar{P}_1})}.
\end{equation}
When $\bar{P}_c \rightarrow \infty$, we can find that the range of
$t_2$ to validate (\ref{eq_UTR_coex_mul}) is $
\frac{C(|h_{14}|^2\stackrel
{\;}{\bar{P}_1})}{C(|h_{13}|^2\stackrel {\;}{\bar{P}_1})}\leq t_2
<1. $ Therefore we need $t_3\in
(0,\,1-C(|h_{14}|^2\bar{P}_1)/C(|h_{13}|^2\bar{P}_1)]$ to meet the
constraints. From \eqref{eq_UTR_CR}, \eqref{Eq_mul_def} and the
fact that $R^m_T=R'_T$ since $u_{Rx}=1$, it is easy to see that
the multiplexing gain $t_2=1-t_3$ is achievable, and
\eqref{Eq_UTR_MG} is valid. Note that our selection of $t_2$ and
$\alpha_1$ is definitely a suboptimal choice with respect to
\eqref{eq_UTR_CR}. If $|h_{13}|\leq |h_{14}|$ and $t_1=0$, there
will be no relaying in this case since Node 3 can not decode $w_1$
before the end of Phase 2. Then the multiplexing gain is zero for
pure receiver relaying as in \eqref{Eq_UTR_MG}.

Now we show that when $|h_{14}|<|h_{12}|$, the multiplexing gain
$1-C(|h_{14}|^2\stackrel {\;}{\bar{P}_1})/C(|h_{12}|^2\stackrel
{\;}{\bar{P}_1})$ in \eqref{eq_mul_UTR} is achievable with only
transmitter relaying ($t_3=0$). To prove this, we sub-optimally
set $t_1=\frac{C(|h_{14}|^2\stackrel
{\;}{\bar{P}_1})}{C(|h_{12}|^2\stackrel {\;}{\bar{P}_1})}$,
$t_2=1-t_1$ and $\theta_1=\theta_{14}-\theta_{24}$. Together with
the setting $R_T = C(|h_{14}|^2\stackrel {\;}{\bar{P}_1})$ as
describe previously, the coexistence constraint in
\eqref{eq_UTR_coex} then becomes \vspace{-3mm}
\begin{equation} \label{eq_mul_UTR_OT_coexs}
C(|h_{14}|^2{P}_1)<C\Bigg(\frac{\big(|h_{14}|+|h_{24}|\sqrt{\alpha_1P_2/{P}_1}\big)^2{P}_1}{1+|h_{24}|^2(1-\alpha_1)P_2}\Bigg).
\end{equation}
With $t_2P_2=\bar{P}_c$ from \eqref{Eq_sum_power} and
$P_1=\stackrel {\;}{\bar{P}_1}$ from \eqref{Eq_UTR_power}, as
$\bar{P}_c \rightarrow \infty$, \eqref{eq_mul_UTR_OT_coexs}
becomes $ |h_{14}|^2\stackrel
{\;}{\bar{P}_1}<\frac{\alpha_1}{1-\alpha_1}. $ Then we have
$\alpha_1>|h_{14}|^2\stackrel
{\;}{\bar{P}_1}/(1+|h_{14}|^2\stackrel {\;}{\bar{P}_1})$ to meet
the constraint for coexistence. It can be easily seen that with
the selected $\alpha_1$, $\theta_1$, and $t_2$, when
$\bar{P}_c\rightarrow \infty$, $R^m_T=R'_T$ in \eqref{eq_UTR_CR}.
Therefore, from \eqref{eq_UTR_CR} and \eqref{Eq_mul_def} we can
find that the multiplexing gain $t_2=1-C(|h_{14}|^2\stackrel
{\;}{\bar{P}_1})/C(|h_{12}|^2\stackrel {\;}{\bar{P}_1})$ is
achievable. Finally, when $|h_{14}| \geq |h_{12}|$ the function
$(.)^+$ in \eqref{eq_mul_UTR} will force the multiplexing gain to
be zero. In this case, the coexistence constraint is violated
since Node 2 cannot relay without correct knowledge of $w_1$.
\vspace{-7mm}
\subsection{Proof of Proposition \ref{Propo_bad_DPC}} \label{App_bad_DPC}
\vspace{-2mm} From \cite{scCR_TWCOMs09}, by treating $X_1$ as
non-causally known transmitter side-information, the following
rate is achievable by the LA-GPC \vspace{-5mm}
\begin{equation} \label{Eq_LA_GPC_r}
\max_{\beta } \{\E\big[\log
\big((|H_{23}|^2\bar{P}_2+|H_{13}|^2\bar{P}_1+1)\bar{P}_2\big)\big]-f(\beta)\},
\end{equation}
where $ f(\beta) \triangleq \E\big[\log
\big(\bar{P}_1\bar{P}_2|H_{13}-\beta
H_{23}|^2+\bar{P}_2+|\beta|^2\bar{P}_1\big)\big], $ and $\beta\in
\mathds{C}$ is the precoding coefficient of the LA-GPC. Note that
solving \eqref{Eq_LA_GPC_r} over $\beta$ is the same as minimizing
$f(\beta)$. In the following we will show that $f(0)$ is the
minimal. We know that for any $\beta$ \vspace{-2mm}
\begin{align}  \notag
f(0)=\E\big[\log \big(\bar{P}_1\bar{P}_2|H_{13}|^2 +\bar{P}_2
\big) \big]&\leq
\E\big[\log \big(\bar{P}_1\bar{P}_2(1+|\beta|^2\sigma^2_{23}/\sigma^2_{13})|H_{13}|^2+\bar{P}_2 \big) \big] \\
&= \E\big[\log \big(\bar{P}_1\bar{P}_2|H_{13}-\beta
H_{23}|^2+\bar{P}_2 \big) \big], \label{Eq_LA_GPC_r1}
\end{align}
where the last equality comes from the fact that since $H_{23}$
and $H_{13}$ are independent zero-mean Gaussian distributed with
variance $\sigma^2_{23}$ and $\sigma^2_{13}$, respectively,
$H_{13}-\beta H_{23}$ is also zero-mean Gaussian distributed with
variance $\sigma^2_{13}+|\beta|^2\sigma^2_{23}$. Thus
$(1+|\beta|^2\sigma^2_{23}/\sigma^2_{13})|H_{13}|^2$ and
$|H_{13}-\beta H_{23}|^2$ have the same distribution. Moreover,
for any $\beta$, \vspace{-2mm}
\[
\E\big[\log \big(\bar{P}_1\bar{P}_2|H_{13}-\beta
H_{23}|^2+\bar{P}_2 \big) \big] \leq \E\big[\log
\big(\bar{P}_1\bar{P}_2|H_{13}-\beta
H_{23}|^2+\bar{P}_2+|\beta|^2\bar{P}_1\big)\big]=f(\beta).
\]
Combining the above equation with \eqref{Eq_LA_GPC_r1}, we know
that $\beta=0$ minimizes $f(\beta)$ and thus maximizes
\eqref{Eq_LA_GPC_r}. Substituting $\beta=0$ into
\eqref{Eq_LA_GPC_r} we get \eqref{Eq_LA_GPC}. \vspace{-7mm}
\subsection{Proof of Theorem \ref{Theorem_UTR_ergo}} \label{App_UTR_ergo}
\vspace{-2mm} To meet the coexistence constraint, we invoke Lemma
\ref{Lemma_WeiYu} again. Following the steps for proving
\eqref{eq_UTR_coex} in Theorem \ref{Theorem_UTR}, from
\eqref{Eq_UTR_X2ergo}, \eqref{Eq_UTR_X2_Phase3ergo},
\eqref{Eq_channel}, and Lemma \ref{Lemma_WeiYu}, to ensure that
the target PU's rate is single-user achievable
\begin{align}
R_T \leq &\frac{1}{n}\sum_{t=1}^{\lfloor t_1n \rfloor} \log
\left(1+|h_{14}(t)|^2P_1\right)+ \frac{1}{n}\sum_{t=\lfloor t_1n
\rfloor+1}^{\lfloor t_1n \rfloor+\lfloor t_2n \rfloor} \log
\left(1+\frac{\Big|h_{14}(t)+\sqrt{\frac{\alpha_1
P_2}{P_1}}h_{24}(t)\Big|^2P_1}{1+|h_{24}(t)|^2(1-\alpha_1)P_2)}\right)
\notag \\&+\frac{1}{n}\sum_{t=n-\lfloor t_3n \rfloor+1}^{n} \log
\left(1+\Bigg|h_{14}(t)+\sqrt{\frac{P^{(3)}_2}{P_1}}h_{24}(t)+\sqrt{\frac{P_3}{P_1}}h_{34}(t)\Bigg|^2P_1\right),
\label{eq_ergodic_Pombra}
\end{align}
where $h_{ij}(t)$ is the realization of the random channel
$H_{ij}$ at time $t$. When $n$ is large enough, the first term of
the RHS of \eqref{eq_ergodic_Pombra} can be rewritten as
\begin{equation}\label{Eq_1st_Ergodic_constraint}
\frac{1}{n}\sum_{t=1}^{\lfloor t_1n \rfloor} \log
\left(1+|h_{14}(t)|^2P_1\right)=t_1\frac{1}{\lfloor t_1n
\rfloor}\sum_{t=1}^{\lfloor t_1n \rfloor} \log
\left(1+|h_{14}(t)|^2P_1\right)=t_1\E[\log(1+|H_{14}|^2P_1)],
\end{equation}
where the last equality comes from the assumption that the channel
coefficients are i.i.d. and applying the ergodicity property.
After applying the same steps to the rest two terms of the RHS of
\eqref{eq_ergodic_Pombra}, we have the constraint for coexistence
\eqref{eq_UTR_coex_ergo}.

The achievable rate of the SU in \eqref{eq_UTR_CR_ergo} can be
obtained similarly. As for the steps to obtain
\eqref{eq_ergodic_Pombra}, we still invoke Lemma
\ref{Lemma_WeiYuMAC} but modify the proof steps of Theorem
\ref{Theorem_UTR} with the channel coefficients replaced by
$h_{ij}(t)$. Then we invoke the channel ergodicity as the proof
steps in \eqref{Eq_1st_Ergodic_constraint} to reach
\eqref{eq_UTR_CR_ergo}. The details are omitted. \vspace{-5mm}
{\renewcommand{\baselinestretch}{1.25}
\bibliographystyle{IEEEtran}
\bibliography{IEEEabrv,CodeSNps,DPC_PAPR2,scpub}
\newpage

\begin{table*}[ht]
\begin {center}
 \caption{Channel gains in \eqref{Eq_channel} Used in the Simulations (Full CSIT)}
\begin{tabular}{ccccccc}
\toprule

 Figure &$h_{14}$ & $h_{24}$ & $h_{34}$ & $h_{13}$ & $h_{23}$&
 $h_{12}$\\

\hline
   \ref{Fig_rate_SNR}   & $0.36e^{1.6j}$ & $0.45e^{1.6j}$ & $0.96e^{-3.1j}$ &$0.96e^{-0.69j}$ &$
0.24e^{-1.89j}$ &$e^{-2.28j}$ \\
   \ref{Fig_UT_good}   & $0.22e^{-1.6j}$ &$0.92e^{0.45j}$ & $0.74e^{1.19j}$ & $0.25e^{-0.69j}$ & $0.32e^{-1.89j}$ & $e^{1.4j}$
   \\
\ref{Fig_rate_h24}    &
 $0.22e^{-0.26j}$ & varying $|h_{24}|$, $\theta_{24}=\frac{\pi}{4}$&$0.32e^{-2.16j}$& $0.52e^{-0.95j}$ & $0.19e^{0.22j}$ & $e^{0.96j}$
\\
\ref{Fig_nonfading_MG} & $0.36e^{-0.78j}$ & $0.95e^{1.95j}$ &
$2.86e^{2.09j}$ & $0.96e^{0.87j}$ & $0.24e^{1.84j}$ &
$e^{-0.965j}$ \\

\ref{Fig_MG2} & $0.22e^{-1.6j}$ & $0.92e^{0.45j}$ &$0.74e^{1.19j}$
& $0.15e^{-0.69j}$ & $0.62e^{-1.89j}$ & $e^{1.4j}$ \\

 \bottomrule
\end{tabular} \label{Table_Channel_Full}
\end {center}
\end{table*}

\begin{table*}[ht]
\begin {center}
 \caption{Channel Variances of Rayleigh Fading Channels in
 \eqref{Eq_channel} \protect\\ Used in the Simulations (Statistics of CSIT)}
\begin{tabular}{ccccccc}
\toprule

 Figure &$\sigma^2_{14}$ & $\sigma^2_{24}$ & $\sigma^2_{34}$ & $\sigma^2_{13}$ & $\sigma^2_{23}$&
 $\sigma^2_{12}$\\

\hline
   \ref{Fig_ergodic_h24_leq_h34}   & 0.4 & 0.21 & 0.91 &0.82 &
   0.88 &1 \\
\ref{Fig_coro2} & 0.4 & 0.89 & 0.2 & 0.95 & 0.88 & 1 \\
\ref{Fig_fading_MG1} & 0.22 & 0.12 & 0.87 & 0.92 & 0.96 & 1 \\

 \bottomrule
\end{tabular} \label{Table_Channel_fading}
\end {center}
\end{table*}

\newpage
\begin{figure}
\centering
\epsfig{file=./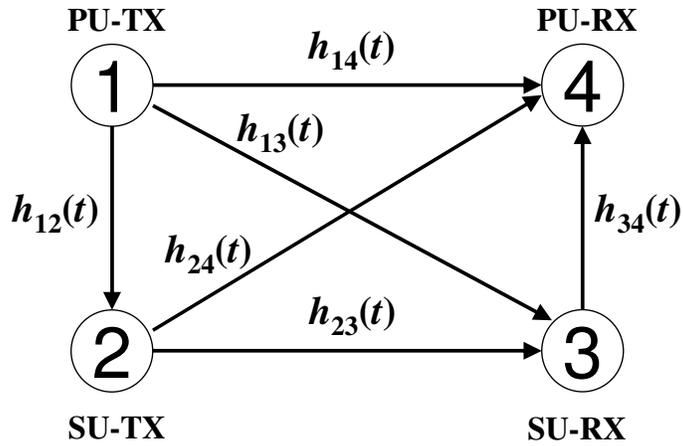,width=0.5\textwidth}
\caption{Cognitive channel model, where the TX and RX are the
abbreviations of the transmitter and receiver, respectively.}
\label{CR_channel}
\end{figure}

\begin{figure}
\centering \epsfig{file=./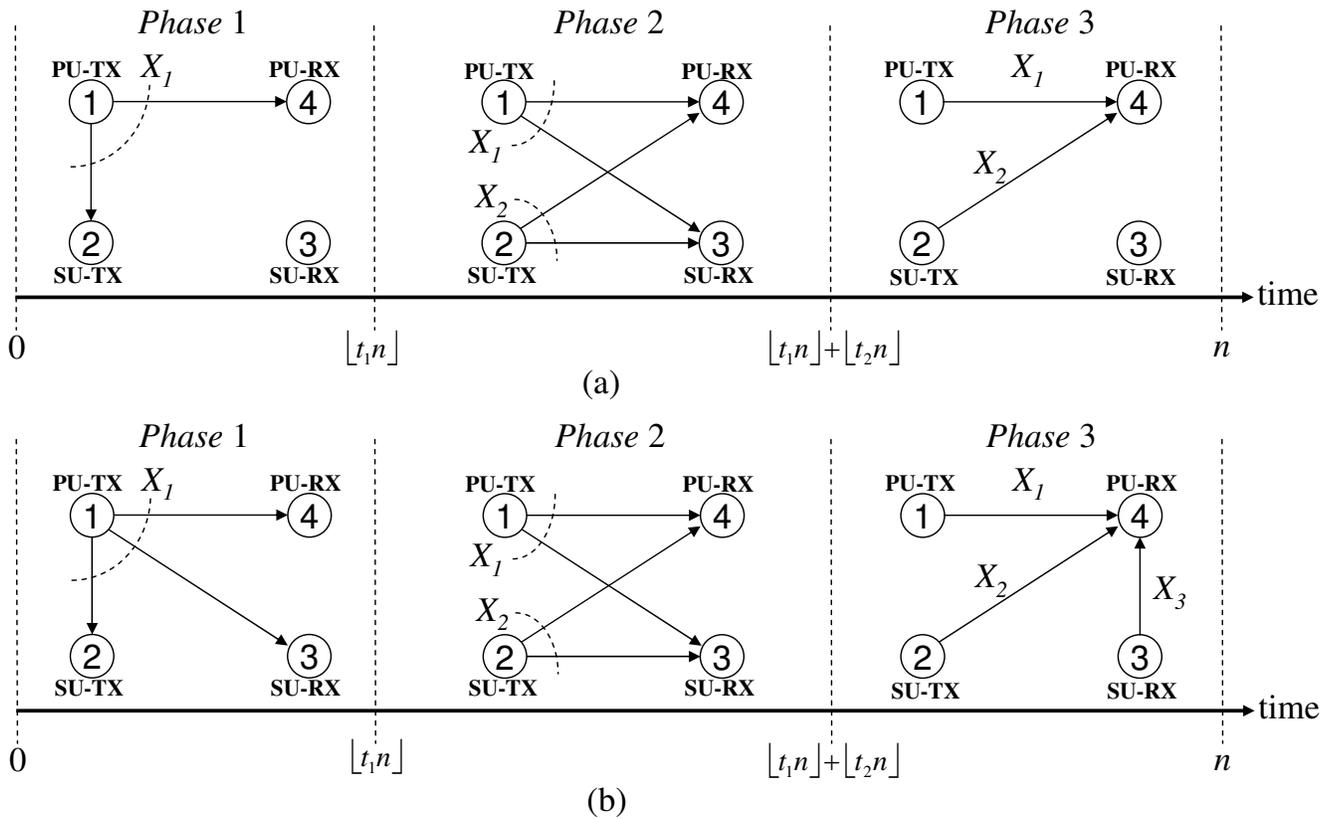,width=\textwidth}
\caption{The signaling methods of the (a) CT and (b) CTR aided
cognitive radio.} \label{Fig_UTR}
\end{figure}

\begin{figure}
\centering \epsfig{file=./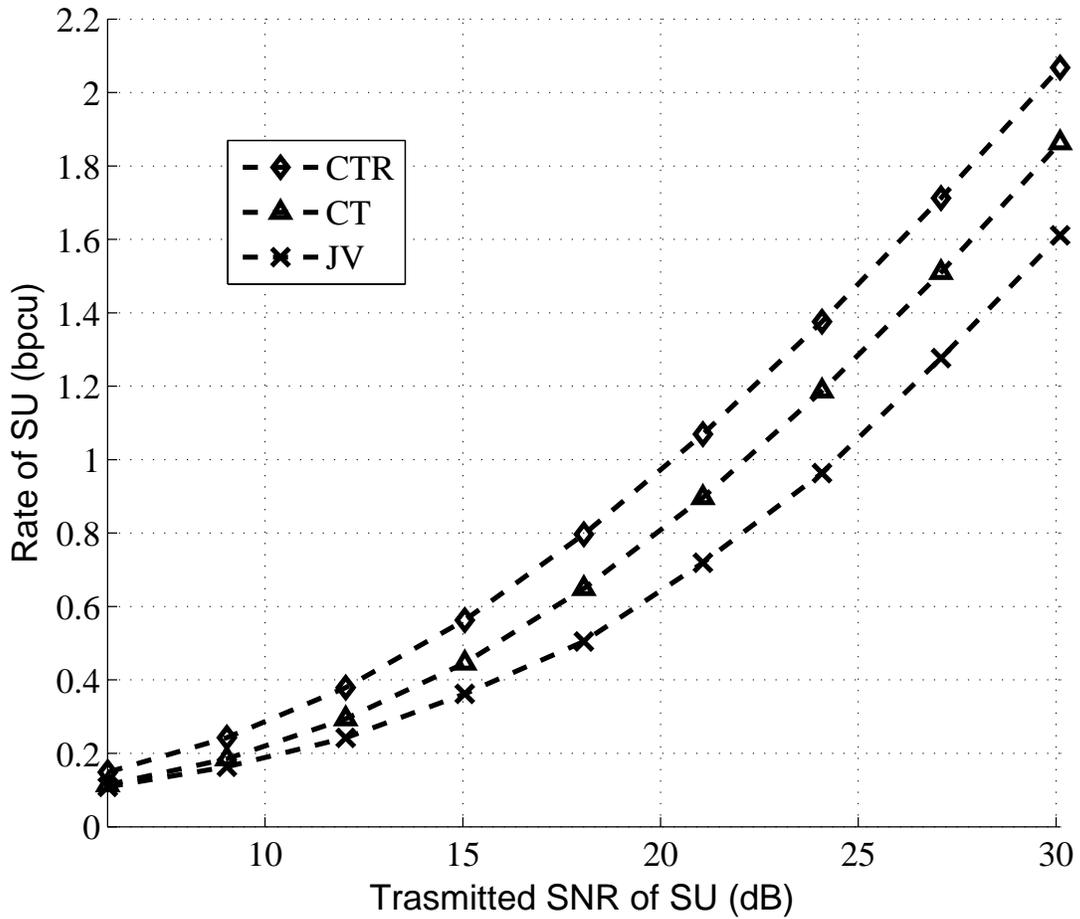 ,
width=0.9\textwidth} \caption{Comparison of the rate performance
of the SU with full CSIT, under the coexistence constraint, and
channels with large $|h_{34}|$ as specified in Table
\ref{Table_Channel_Full}. The rate is measured in bit per channel
use (bpcu).} \label{Fig_rate_SNR}
\end{figure}

\begin{figure}
\centering \epsfig{file=./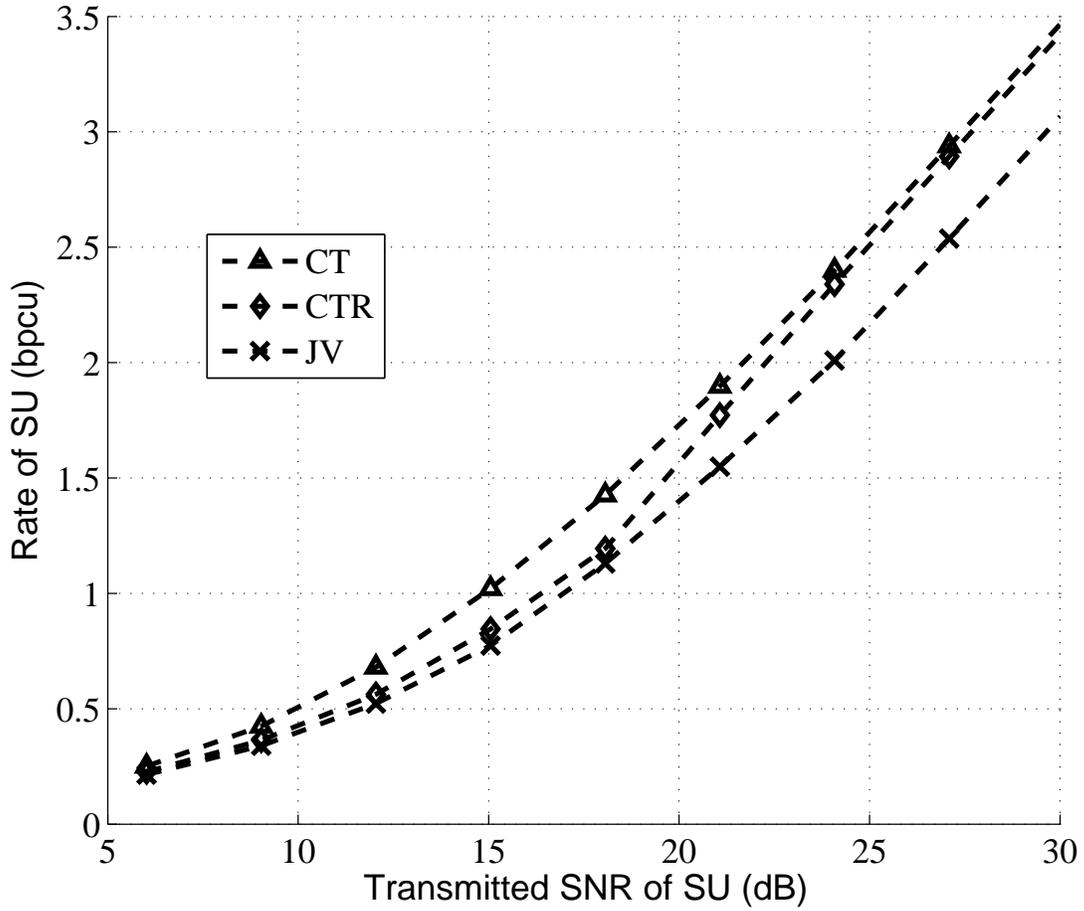
,width=0.9\textwidth} \caption{Comparison of the rate performance
of the SU with full CSIT, under the coexistence constraint, and
channels with the $|h_{34}|$ smaller than the $|h_{24}|$ as
specified in Table \ref{Table_Channel_Full}. } \label{Fig_UT_good}
\end{figure}

\begin{figure}
\centering \epsfig{file=./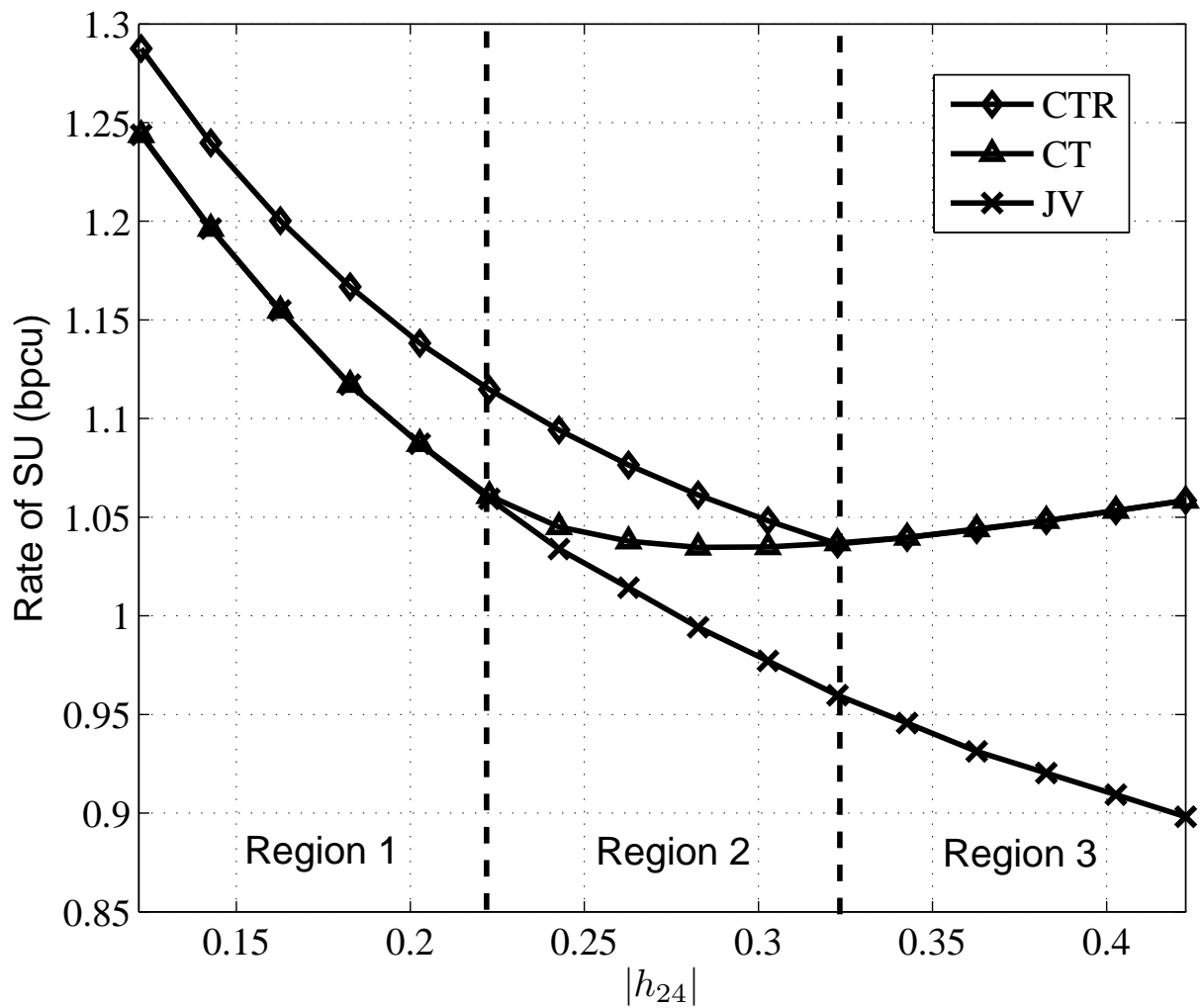 ,
width=1\textwidth} \caption{Comparison of the rate performance of
the SU with full CSIT for different $|h_{24}|$, with 20 dB
transmitted SNR and channel gains specified in Table
\ref{Table_Channel_Full}. Region 1 is the one where
$|h_{24}|<|h_{14}|$, Region 2 is the one where
$|h_{14}|<|h_{24}|<|h_{34}|$, and Region 3 is the one where
$|h_{34}|<|h_{24}|$, respectively. } \label{Fig_rate_h24}
\end{figure}

\begin{figure}
\centering \epsfig{file=./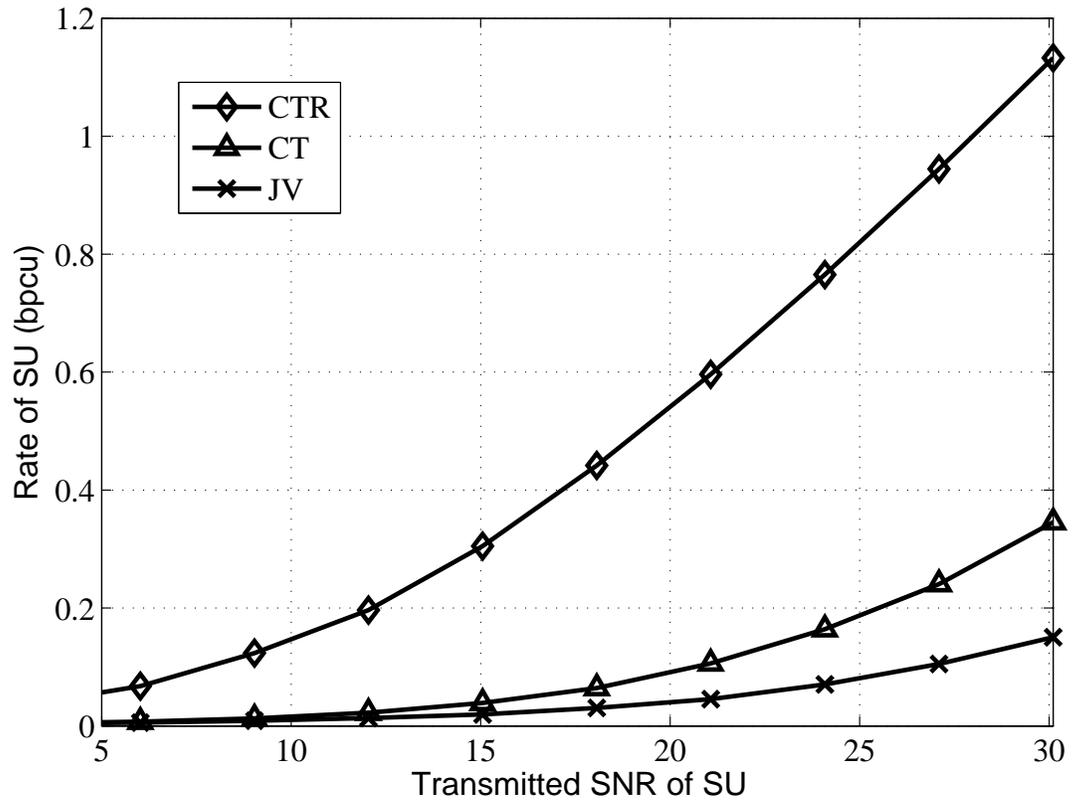 ,
width=0.9\textwidth} \caption{Comparison of the rate performance
of the SU under the coexistence constraint, and fast Rayleigh
fading channels with the statics of CSIT. The channel variances
are listed in Table \ref{Table_Channel_fading}.}
\label{Fig_ergodic_h24_leq_h34}
\end{figure}

\begin{figure}
\centering \epsfig{file=./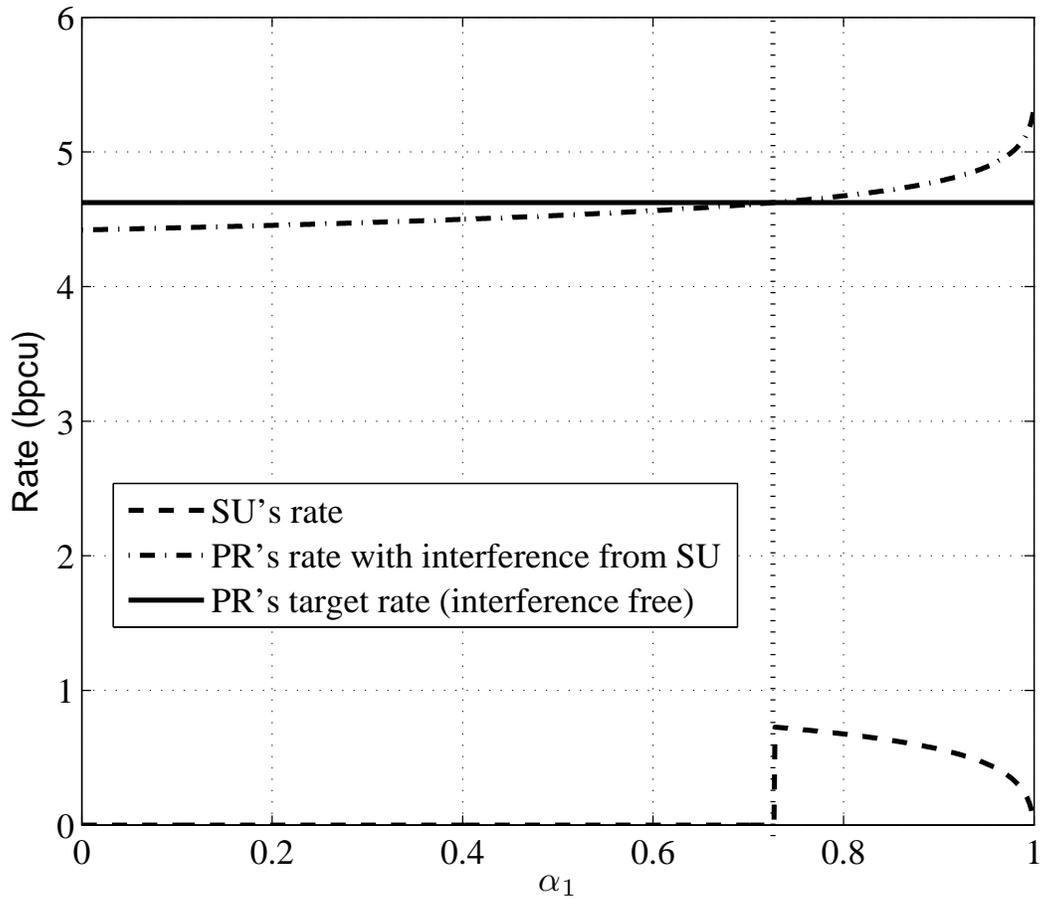
,width=0.9\textwidth} \caption{Rate performance of the SU of CTR
versus the common message relaying ratio $\alpha_1$, under fast
Rayleigh fading channels with the statics of CSIT. The transmit
SNR of SU is 20 dB and the channel variances are listed in Table
\ref{Table_Channel_fading}.} \label{Fig_coro2}
\end{figure}

\begin{figure}
\centering \epsfig{file=./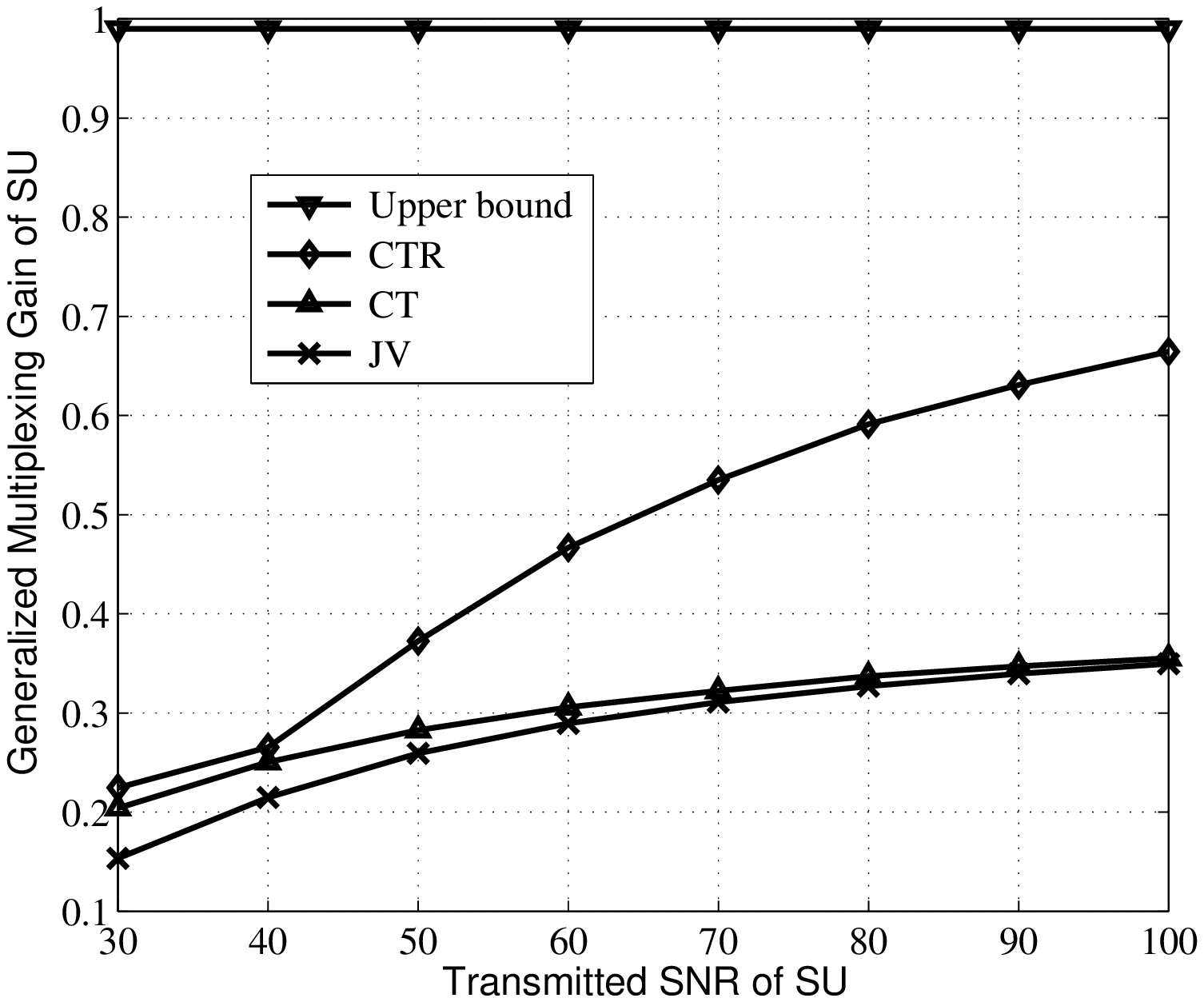 ,
width=0.9\textwidth} \caption{Comparison of the generalized
multiplexing gain performance of the SU with full CSIT, under
channels with large $|h_{34}|$ as specified in Table
\ref{Table_Channel_Full}. The upper bound is computed by
\eqref{eq_mul_UTR} with $m=0.99$.} \label{Fig_nonfading_MG}
\end{figure}

\begin{figure}
\centering \epsfig{file=./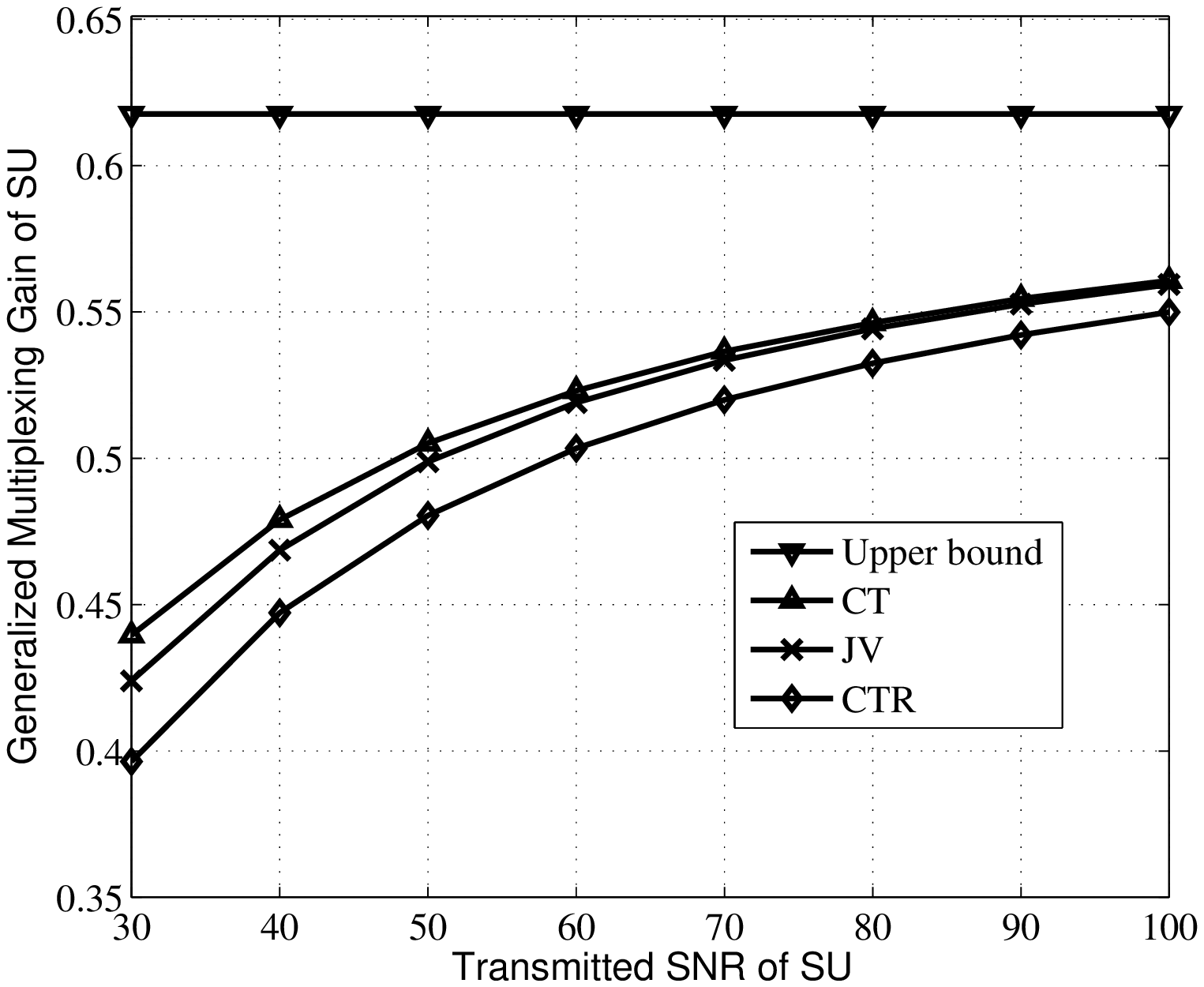
,width=0.9\textwidth} \caption{Comparison of the generalized
multiplexing gain performance of the SU with full CSIT, under
channels with small $|h_{34}|$ as specified in Table
\ref{Table_Channel_Full}. The upper bound is computed by
\eqref{eq_mul_UTR} with $m=0.99$.} \label{Fig_MG2}
\end{figure}

\begin{figure}
\centering \epsfig{file=./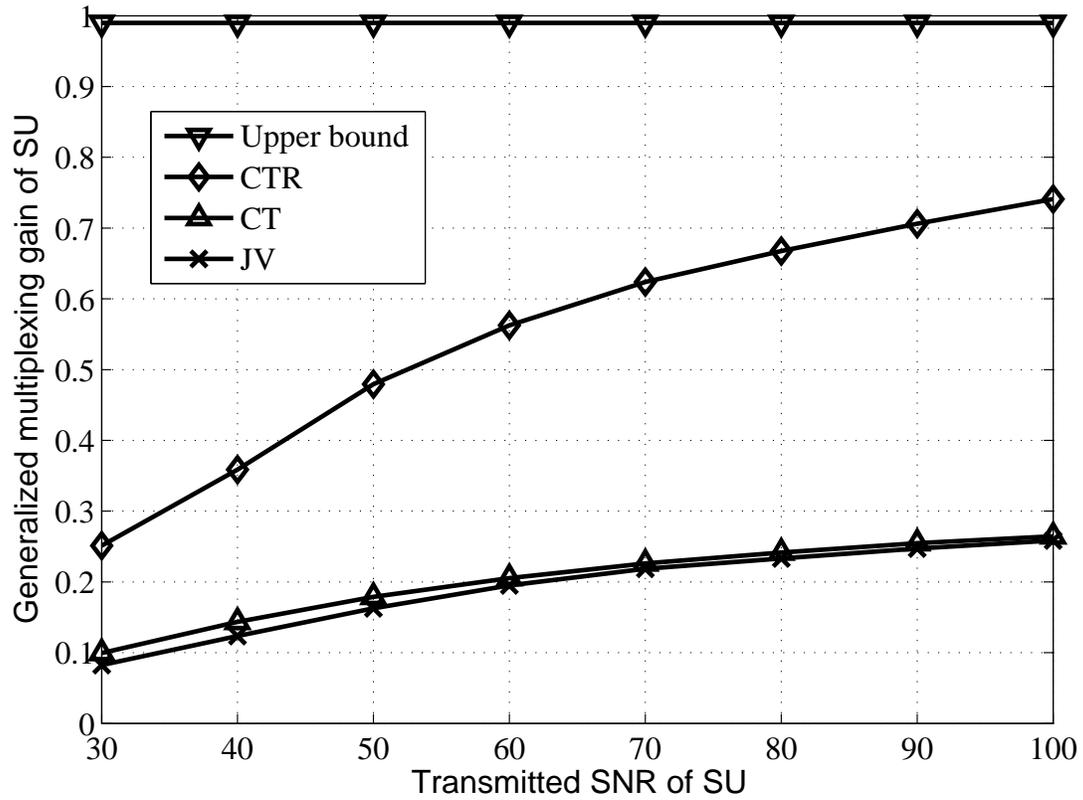
,width=0.9\textwidth} \caption{Comparison of the generalized
multiplexing gain performance of the SU, under fast Rayleigh
fading channels with the statistics of CSIT. The channel variances
are listed in Table \ref{Table_Channel_fading}. The upper bound is
computed by \eqref{eq_mul_UTR_fading} with $m=0.99$.}
\label{Fig_fading_MG1}
\end{figure}

\end{document}